\renewcommand{\leq}{\leqslant}
\renewcommand{\geq}{\geqslant}
\renewcommand{\epsilon}{\varepsilon}
\renewcommand{\hat}{\widehat}
\renewcommand{\tilde}{\widetilde}
\def\1{\mathbbm{1}}
\def\R{\mathbb{R}}
\def\C{\mathbb{C}}
\def\Z{\mathbb{Z}}
\def\mci{\mathcal{I}}
\def\rme{\mathrm e}
\def\rmi{\mathrm i}
\def\rmd{\mathrm d}
\newcommand{\sign}{\mbox{sign}}
\newcommand{\atan}{\mbox{$\mathrm{arctan}$}}
\theoremstyle{plain}
\newtheorem{thm}{Theorem}
\newtheorem{lem}[thm]{Lemma}
\newtheorem{prop}[thm]{Proposition}
\newtheorem{cor}[thm]{Corollary}
\theoremstyle{definition}
\newtheorem*{defn}{Definition}
\theoremstyle{remark}
\newtheorem{rem}{Remark}
\title{New results on approximate {H}ilbert pairs of wavelet filters with
  common factors}
\author[a]{Sophie Achard}
\author[b]{Marianne Clausel}
\author[c]{Ir\`ene Gannaz}
\author[d]{Fran\c cois Roueff}
\affil[a]{Univ. Grenoble Alpes, CNRS, Grenoble INP, GIPSA-lab, 38000 Grenoble, France}
\affil[b]{Univ. Grenoble Alpes, CNRS, Grenoble INP, LJK, 38000 Grenoble, France}
\affil[c]{Universit\'e de Lyon,
CNRS UMR 5208, INSA de Lyon, Institut Camille Jordan, France}
\affil[d]{LTCI, T\'el\'ecom-Paristech, Universit\'e Paris-Saclay, France}
\date{\today}
\begin{document}

\maketitle

\setlength{\parindent}{0pt}
\setlength{\parskip}{\baselineskip}

\begin{abstract}
  In this paper, we consider the design of wavelet filters based on the
  \emph{Thiran} \emph{common-factor} approach proposed in
  \cite{Selesnick2001}. This approach aims at building finite impulse response
  filters of a \emph{Hilbert-pair} of wavelets serving as real and imaginary
  part of a complex wavelet.  Unfortunately it is not possible to construct
  wavelets which are both finitely supported and analytic.  The wavelet filters
  constructed using the \emph{common-factor} approach are then approximately
  analytic.  Thus, it is of interest to control their analyticity.  The purpose
  of this paper is to first provide precise and explicit expressions as well as
  easily exploitable bounds for quantifying the analytic approximation of this
  complex wavelet. Then, we prove the existence of such filters enjoying the
  classical perfect reconstruction conditions, with arbitrarily many vanishing
  moments.
\end{abstract}

\textbf{Keywords.} Complex wavelet, Hilbert-pair, orthonormal filter banks, common-factor wavelets
\section{Introduction}

Wavelet transforms provide efficient representations for a wide class of
signals. In particular signals with singularities may have a sparser
representation compared to the representation in Fourier basis. Yet, an
advantage of Fourier transform is its analyticity, which enables to exploit
both the magnitude and the phase in signal analysis. In order to combine both advantages of Fourier
and real wavelet transform, one possibility is to use a complex wavelet
transform. The analyticity can be obtained by choosing properly the wavelet filters. This may offer a
true enhancement of real wavelet transform for example in singularity extraction purposes. We refer to \cite{Selesnick-review,tay2007designing}
and references therein for an overview of the motivations for analytic wavelet
transforms.  A wide range of applications can be addressed using such wavelets
as image analysis \citep{chaux2006image}, signal processing
\citep{wang2010enhancement}, molecular biology
\citep{murugesan2015application}, neuroscience \citep{whitcher2005time}.

Several approaches have been proposed to design a pair of wavelet filters where
one wavelet is (approximately) the Hilbert transform of the other. Using this
pair as real and imaginary part of a complex wavelet allows the design of
(approximately) analytic wavelets.  The simplest complex analytic wavelets are
the generalized Morse wavelets, which are used in continuous wavelet transforms
in \cite{lilly2010analytic}. The approximately analytic Morlet wavelets can
also be used for the same purpose, see \cite{Selesnick-review}. However, for
practical or theoretical reasons, it is interesting to use discrete wavelet
transforms with finite filters, in which case it is not possible to design a
perfectly analytic wavelets. In addition to the finite support property, one
often requires the wavelet to enjoy sufficiently many vanishing moments,
perfect reconstruction, and smoothness properties. Among others linear-phase
biorthogonal filters were proposed in \cite{dualtree1,dualtree2} or q-shift
filters in \cite{qshift}. We will focus here on the \emph{common-factor}
approach, developed in \cite{Selesnick2001,Selesnick-thiran}. In
\cite{Selesnick-thiran} a numerical algorithm is proposed to compute the FIR
filters associated to an approximate Hilbert pair of orthogonal wavelet
bases. Improvements of this method have been proposed recently in
\cite{tay2010new,murugesan2014new}. The approach of \cite{Selesnick2001} is
particularly attractive as it builds upon the usual orthogonal wavelet base
construction by solving a Bezout polynomial equation. Nevertheless, to the best
of our knowledge, the validity of this specific construction have not been
proved. Moreover the quality of the analytic approximation have not been
thoroughly assessed. The main goal of this paper is to fill these gaps.  We
also provide a short simulation study to numerically evaluate the quality of
analyticity approximation for specific \emph{common-factor} wavelets.

After recalling the definition of Hilbert pair wavelet filters, the
construction of the Thiran's \emph{common-factor} wavelets following
\cite{Thiran,Selesnick-thiran} is summarized in Section~\ref{sec:appr-hilb-pair}.
Theoretical results are then developed to evaluate the impact of the Thiran's
\emph{common-factor} degree $L$ on the analytic property of the derived complex
wavelet. In Section~\ref{sec:analytic}, an explicit formula to
quantify the analytic approximation is derived. In addition, we provide a bound
demonstrating the improvement of the analytic property as $L$ increases. These
results apply to all wavelets obtained from FIR filters with Thiran's \emph{common-factor}.
Of particular interest are the orthogonal wavelet bases with perfect
reconstruction.
Section~\ref{sec:perfect_reconstruction} is devoted to proving the
existence of such wavelets arising from filters with Thiran's \emph{common-factor},
which correspond to the wavelets introduced in
\cite{Selesnick2001,Selesnick-thiran}. Finally, in Section~\ref{sec:simul},
some numerical examples illustrate our findings.  All proofs are given in the
Appendices.

\section{Approximate Hilbert pair wavelets}
\label{sec:appr-hilb-pair}
\subsection{Wavelet filters of a Hilbert pair}
Let $\psi_G$ and $\psi_H$ be two real-valued wavelet functions.
Denote by  $\hat\psi_G$ and $\hat\psi_H$ their Fourier transform,
$$
\hat\psi_G(\omega)=\int\psi_G(t)\,\rme^{-\rmi t\omega}\;\rmd\omega\;.
$$
We say that $(\psi_G,\psi_H)$ forms a Hilbert
pair if
$$
\hat\psi_G(\omega)=-\rmi \; \sign(\omega)\hat\psi_H(\omega)\;,
$$
where $\sign(\omega)$ denotes the sign function taking values $-1,0$ and 1 for
$\omega<0$, $\omega=0$ and $\omega>0$, respectively. Then the complex-valued
wavelet $\psi_H (t) + \rmi \psi_G (t)$ is analytic since its Fourier transform is
only supported on the positive frequency semi-axis.

Suppose now that the two above wavelets are obtained from the (real-valued)
low-pass filters $(g_0(n))_{n\in\Z}$ and $(h_0(n))_{n\in\Z}$, using the usual
multi-resolution scheme (see \cite{Daubechies92}).
We denote their z-transforms by $G_0(\cdot)$ and $H_0(\cdot)$, respectively.
In \cite{Selesnick2001} and \cite{Ozkaramanli}, it is established that a
necessary and sufficient condition for $(\psi_G,\psi_H)$ to form a Hilbert pair is to satisfy, for all $\omega\in(-\pi,\pi)$,
\begin{equation}
  \label{eq:hilbert-apir-filters-condition}
  G_0(\rme^{\rmi\omega})=H_0(\rme^{\rmi\omega})\rme^{-\rmi\omega/2}\;.
\end{equation}
Since $\rme^{-\rmi\omega/2}$ takes different values at $\omega=\pi$ and
$\omega=-\pi$, we see that this formula cannot hold if both $G_0$ and $H_0$ are
continuous on the unit circle, which indicates that the construction of Hilbert
pairs cannot be obtained with usual convolution filters and in particular with
finite impulse response (FIR) filters. Hence a strict analytic property for the
wavelet is not achievable for a compactly supported wavelet, which is also a
direct consequence of the Paley-Wiener theorem.

However, for obvious practical reasons, the compact support property of the wavelet and
the corresponding FIR property of the filters must be preserved. Thus the strict analytic condition~\eqref{eq:hilbert-apir-filters-condition}
has to be relaxed into an approximation around the zero frequency,
\begin{equation}
  \label{eq:hilbert-apir-filters-condition-approx}
  G_0(\rme^{\rmi\omega})\sim H_0(\rme^{\rmi\omega})\rme^{-\rmi\omega/2}
  \quad\text{as}\quad\omega\to0\;.
\end{equation}
Several constructions have then been proposed to define approximate Hilbert pair
wavelets, that is, pairs of wavelet functions satisfying the quasi analytic
condition~\eqref{eq:hilbert-apir-filters-condition-approx} \citep{tay2007designing}.
The \emph{common-factor} procedure proposed in \cite{Selesnick-thiran}, is giving one solution to the
construction of approximate Hilbert pair wavelets. This is the focus of the following developments. 

\subsection{The common-factor procedure}

The \emph{common-factor} procedure \citep{Selesnick-thiran} is designed to provide
approximate Hilbert pair wavelets driven by an integer $L\geq1$ and additional
properties relying on a \emph{common factor} transfer function $F$. Namely, the solution reads
\begin{align}
  \label{eqn:CF1}
H_0(z)&=F(z)D_{L}(z)\;,\\
  \label{eqn:CF2}
G_0(z)&=F(z)D_{L}(1/z)z^{-L}\;,
\end{align}
where $D_{L}$ is the $z$ transform of a causal FIR filter of length $L$,
$D_{L}(z)=1+\sum_{\ell=1}^L d(\ell) z^{-\ell}$,
such that
\begin{equation}
\label{eqn:D-approx}
\frac{\rme^{-\rmi\omega L}D_{L}(\rme^{-\rmi\omega})}{D_{L}(\rme^{\rmi\omega})}= \rme^{-\rmi\omega /2}+ O(\omega^{2L+1})  \quad\text{as}\quad\omega\to0\;.
\end{equation}
In \cite{Thiran}, a causal FIR filter satisfying this constraint is defined, the
so-called \emph{maximally flat} solution given by (see also
\cite[Eq~(2)]{Selesnick-thiran}):
\begin{equation}
\label{eqn:selesnick}
d(\ell)=(-1)^n \binom{L}{\ell}\prod_{k=0}^{\ell-1} \frac{1/2-L+k}{3/2+k}, \quad \ell=1,\dots,L.
\end{equation}
The cornerstone of our subsequent results is the following simple expression for
$D_{L}(z)$, which appears to be new, up to our best knowledge. 
\begin{prop}
  \label{prop:D}
  Let $L$ be a positive integer and $D_{L}(z)=1+\sum_{n=1}^L d(n) z^{-n}$
  where the coefficients $(d(n))_n$ are defined by~\eqref{eqn:selesnick}. Then, for all $z\in\C^*$, we
  have
\begin{equation}\label{eq:D-expression-closed-form}
D_{L}(z)=\frac{1}{2(2L+1)}z^{-L}\left[(1+z^{1/2})^{2L+1}+(1-z^{1/2})^{2L+1}\right],
\end{equation}
where $z^{1/2}$ denotes any of the two complex numbers whose squares are equal to $z$.
\end{prop}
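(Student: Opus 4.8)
The plan is to substitute $w=z^{1/2}$ so that the bracketed expression on the right-hand side of \eqref{eq:D-expression-closed-form} becomes a polynomial in $w$ whose odd-degree terms cancel. Expanding by the binomial theorem gives $(1+w)^{2L+1}+(1-w)^{2L+1}=2\sum_{m=0}^{L}\binom{2L+1}{2m}w^{2m}$, and since only even powers of $w$ survive, the expression depends only on $w^2=z$ and is thus independent of the chosen square root, as required. After multiplying by $z^{-L}/(2(2L+1))=w^{-2L}/(2(2L+1))$, reindexing via $n=L-m$, and using the symmetry $\binom{2L+1}{2L-2n}=\binom{2L+1}{2n+1}$, the right-hand side takes the form $\frac{1}{2L+1}\sum_{n=0}^{L}\binom{2L+1}{2n+1}z^{-n}$.

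It then remains to identify this with $D_{L}(z)=1+\sum_{n=1}^{L}d(n)z^{-n}$. The constant term ($n=0$) is $\frac{1}{2L+1}\binom{2L+1}{1}=1$, matching the normalization of $D_{L}$, so the proposition reduces to the single combinatorial identity $d(n)=\frac{1}{2L+1}\binom{2L+1}{2n+1}$ for $1\le n\le L$, with $d(n)$ given by \eqref{eqn:selesnick}.

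To verify this identity I would rewrite the product in \eqref{eqn:selesnick} in closed form. Putting each factor over a common denominator $2$, the numerator $\prod_{k=0}^{n-1}(1/2-L+k)$ becomes $2^{-n}\prod_{k=0}^{n-1}(2k+1-2L)=(-1)^n 2^{-n}(2L-1)(2L-3)\cdots(2L-2n+1)$, while the denominator $\prod_{k=0}^{n-1}(3/2+k)$ becomes $2^{-n}\cdot 3\cdot 5\cdots(2n+1)$. Expressing the resulting odd double factorials through ordinary factorials, for instance $(2L-1)(2L-3)\cdots(2L-2n+1)=\frac{(2L)!(L-n)!}{2^n L!(2L-2n)!}$, and combining with the factor $(-1)^n\binom{L}{n}$, all powers of $2$, the two sign factors, and the factors $L!$, $n!$, $(L-n)!$ cancel, leaving $d(n)=\frac{(2L)!}{(2n+1)!(2L-2n)!}$. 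Finally, since $\frac{(2L+1)!}{2L+1}=(2L)!$, this is exactly $\frac{1}{2L+1}\binom{2L+1}{2n+1}$, which completes the argument.

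The only delicate point is the bookkeeping in converting the half-integer product of \eqref{eqn:selesnick} into ordinary factorials; once the double factorials are expressed correctly, the cancellations are automatic and no sign errors remain. An alternative route would be to show directly that the right-hand side of \eqref{eq:D-expression-closed-form} is a degree-$L$ polynomial in $z^{-1}$ with constant term $1$ satisfying the flatness condition \eqref{eqn:D-approx}, and then invoke uniqueness of the Thiran filter; but the explicit computation above has the advantage of not requiring such a uniqueness statement to be established separately.
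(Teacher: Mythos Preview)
Your proof is correct and follows essentially the same approach as the paper: expand the bracketed expression via the binomial theorem and reduce the statement to a closed-form identity for $d(n)$, which is then verified by rewriting the half-integer product in~\eqref{eqn:selesnick} in terms of ordinary factorials. The only cosmetic difference is that the paper works with the ratio $d(L-n)/d(L)$ and lands on $\binom{2L+1}{2n}$ (together with $d(L)=1/(2L+1)$), whereas you compute $d(n)$ directly and obtain the equivalent form $\frac{1}{2L+1}\binom{2L+1}{2n+1}$; the two are related by the binomial symmetry you already invoke.
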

Here $\C^*$ denotes the set of all non-zero complex numbers.
\begin{rem} In spite of the ambiguity in the definition of $z^{1/2}$, the
  right-hand side in~\eqref{eq:D-expression-closed-form} is unambiguous
  because, when developing the two factors in the expression between square
  brackets, all the odd powers of $z^{1/2}$ cancel out.
\end{rem}
\begin{rem}
  It is interesting to note that the closed
form expression~\eqref{eq:D-expression-closed-form} of $D_L$ directly implies
the approximation~\eqref{eqn:D-approx}. Indeed, the right-hand side
of~\eqref{eq:D-expression-closed-form} yields
\begin{align*}
  D_{L}(\rme^{\rmi\omega})
  &=\frac1{2(2L+1)}\rme^{-\rmi\omega(L-1/2)/2}
                                                                         \left[(2
                          \cos(\omega/4))^{2L+1}+(-2\rmi
                          \sin(\omega/4))^{2L+1}\right]  \\
  &=\frac{2^{2L}}{2L+1}\rme^{-\rmi\omega(L-1/2)/2}\cos^{2L+1}(\omega/4)+O(\omega^{2L+1})\;.
\end{align*}
It is then straightforward to obtain~\eqref{eqn:D-approx}. 
\end{rem}

\begin{proof}
  See Section~\ref{sec:proofs-analytic}.
\end{proof}

To summarize the \emph{common-factor} approach, we use the following definition.
\begin{defn}[\emph{Common-factor} wavelet filters]
  For any positive integer $L$ and FIR filter with transfer function $F$, a pair of wavelet filters
  $\{H_0,G_0\}$ is called an $L$-approximate Hilbert wavelet filter pair
   with common factor $F$ if it satisfies~\eqref{eqn:CF1}
  and~\eqref{eqn:CF2} with $H_0(1)=G_0(1)=\sqrt{2}$.
\end{defn}
Condition $H_0(1)=G_0(1)=\sqrt{2}$ is equivalent to \begin{equation}
    \label{eq:Fcondition}
    F(1)=\frac{\sqrt2}{D_L(1)}= \sqrt2(2L+1)2^{-2L}\;.
  \end{equation}
A remarkable feature in the choice of the common filter $F$ is that it can be used to ensure additional properties such as an
arbitrary number of vanishing moments, perfect reconstruction or smoothness
properties.

First an arbitrary number $M$ of vanishing moments is set by writing
\begin{equation}
\label{eqn:VM}
F(z)=Q(z)(1+1/z)^M,
\end{equation}
with $Q(z)$ the $z$-transform of a causal FIR filter (hence a real polynomial
of $z^{-1}$).

An additional condition required for the wavelet decomposition is perfect
reconstruction. It is acquired when the filters satisfy the following
conditions (see \cite{PRcondition}):
\begin{align}
\label{eqn:PR-G}
G_0(z)G_0(1/z)+G_0(-z)G_0(-1/z)&=2\;,\tag{PR-G}\\
\label{eqn:PR-H}
H_0(z)H_0(1/z)+H_0(-z)H_0(-1/z)&=2\;\tag{PR-H}.
\end{align}
This condition is classically used for deriving wavelet bases
$\psi_{Gj,k}=2^{j/2}\psi_G(2^j\cdot-k)$ and
$\psi_{Hj,k}=2^{j/2}\psi_H(2^j\cdot-k)$, $j,k\in\Z$, which are 
orthonormal bases of $L^2(\R)$. This will be
investigated in Section~\ref{prop:R}.

\section{Quasi-analyticity of \emph{common-factor} wavelets}
\label{sec:analytic}

We now investigate the quasi-analyticity properties of the complex wavelet
obtained from Hilbert pairs wavelet filters with the \emph{common-factor} procedure. 

Let $(\phi_H(\cdot), \psi_H(\cdot))$ be respectively the father and the mother
wavelets associated with the (low-pass) wavelet filter $H_0$. The transfer
function $H_0$ is normalized so that $H_0(1)=\sqrt2$ (this is implied
by~\eqref{eqn:CF1} and~\eqref{eq:Fcondition}). The  father and  mother
wavelets can be defined through their Fourier transforms as
\begin{align}
\label{eqn:phi:filter}
\hat\phi_H(\omega)&=\prod_{j=1}^\infty \left[2^{-1/2}H_0(\rme^{\rmi 2^{-j}\omega})\right],\\
\label{eqn:psi:filter}
\hat\psi_H(\omega)&=2^{-1/2}H_1(\rme^{\rmi\,\omega/2})\, \hat\phi_H(\omega/2),
\end{align}
where $H_1$ is the corresponding high-pass filter transfer function defined by
$H_1(z)=z^{-1}H_0(-z^{-1})$ (see {\it e.g.} \cite{Selesnick2001}).  We also
denote by $(\phi_G, \psi_G)$ the father and the mother wavelets associated with
the wavelet filter $G_0$. Equations similar to \eqref{eqn:phi:filter} and
\eqref{eqn:psi:filter} hold for $\hat\phi_G$, and $\hat\psi_G$ using $G_0$ and
$G_1$ in place of $H_0$ and $H_1$ (see {\it e.g.}  \cite{Selesnick2001}).

We first give an explicit expression of $\hat\phi_G$ and of
$\hat\psi_G$ with respect to $\hat\phi_H$ and
$\hat\psi_H$.
\begin{thm}\label{th:explicit-phi-psi}
  Let $L$ be a positive integer.  Let $\{H_0,G_0\}$ be an $L$-approximate
  Hilbert wavelet filter pair. Let $(\phi_H, \psi_H)$ denote the father and
  mother wavelets defined by~\eqref{eqn:phi:filter} and~\eqref{eqn:psi:filter}
  and denote $(\phi_G, \psi_G)$ the wavelets defined similarly from the
  filter $G_0$.  Then, we have, for all $\omega\in\R$,
\begin{align}\label{eqn:explicit-phi}
\hat\phi_G(\omega)&=\rme^{\rmi \beta_L(\omega)}\,\hat\phi_H(\omega)\rme^{-\rmi\omega/2}\;,\\
\label{eqn:explicit-psi}
\hat\psi_G(\omega)&=
\rmi\;\rme^{\rmi \eta_L(\omega)}\,\hat\psi_H(\omega)\;.
\end{align}
where
\begin{align}\label{eq:alpha-def}
  \alpha_L(\omega)&= 2(-1)^L\,\atan\left(\tan^{2L+1}(\omega/4)\right)\;,\\
  \label{eq:beta-def}
  \beta_L(\omega)&=\sum_{j=1}^{\infty} \alpha_L(2^{-j}\omega)\;,\\
  \label{eq:eta-def}
  \eta_L(\omega)&=-\alpha_L(\omega/2+\pi)+\beta_L(\omega/2)\;.
\end{align}
In~\eqref{eq:alpha-def}, we use the convention $\atan(\pm\infty)=\pm\pi/2$ so
that $\alpha_L$ is well defined on $\R$.
\end{thm}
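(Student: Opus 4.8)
The plan is to reduce the whole statement to a single identity for the ratio $G_0/H_0$ on the unit circle and then propagate it through the infinite products defining the wavelets. Since the common factor $F$ appears identically in~\eqref{eqn:CF1} and~\eqref{eqn:CF2}, it cancels in the ratio, leaving
\[
\frac{G_0(\rme^{\rmi\omega})}{H_0(\rme^{\rmi\omega})}=\rme^{-\rmi L\omega}\,\frac{D_L(\rme^{-\rmi\omega})}{D_L(\rme^{\rmi\omega})}\;.
\]
I would substitute the closed form~\eqref{eq:D-expression-closed-form} from Proposition~\ref{prop:D}, using the factorisations $1+\rme^{\rmi\omega/2}=2\rme^{\rmi\omega/4}\cos(\omega/4)$ and $1-\rme^{\rmi\omega/2}=-2\rmi\,\rme^{\rmi\omega/4}\sin(\omega/4)$. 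This shows first that $D_L(\rme^{\rmi\omega})\neq0$ for every real $\omega$ (the cosine and sine powers cannot vanish simultaneously), so the ratio is well defined, and after collecting the exponential prefactors it gives
\[
\frac{G_0(\rme^{\rmi\omega})}{H_0(\rme^{\rmi\omega})}=\rme^{-\rmi\omega/2}\,\frac{1+\rmi(-1)^L\tan^{2L+1}(\omega/4)}{1-\rmi(-1)^L\tan^{2L+1}(\omega/4)}\;.
\]
Finally the elementary identity $\frac{1+\rmi\tau}{1-\rmi\tau}=\rme^{2\rmi\,\atan\tau}$, valid for all real $\tau$ and extended to $\tau=\pm\infty$ by the stated convention, turns the fraction into $\rme^{\rmi\alpha_L(\omega)}$ with $\alpha_L$ as in~\eqref{eq:alpha-def}, so that $G_0/H_0=\rme^{-\rmi\omega/2}\rme^{\rmi\alpha_L(\omega)}$ on the whole circle.

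For the father wavelets I would write $G_0=H_0\,\rho$ with $\rho(\rme^{\rmi\theta})=\rme^{-\rmi\theta/2}\rme^{\rmi\alpha_L(\theta)}$ and take the ratio of the two infinite products~\eqref{eqn:phi:filter}, which is legitimate factor by factor since $D_L$ never vanishes on the circle. This yields $\hat\phi_G(\omega)/\hat\phi_H(\omega)=\prod_{j\geq1}\rho(\rme^{\rmi2^{-j}\omega})$, and the phase splits into two sums: the geometric series $\sum_{j\geq1}2^{-j}\omega/2=\omega/2$ and $\sum_{j\geq1}\alpha_L(2^{-j}\omega)=\beta_L(\omega)$. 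Absolute convergence of the latter follows from $\alpha_L(\theta)=O(\theta^{2L+1})$ near $0$, so that the terms decay geometrically in $j$. This is precisely~\eqref{eqn:explicit-phi}.

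For the mother wavelets I would exploit the high-pass definitions $H_1(z)=z^{-1}H_0(-z^{-1})$ and $G_1(z)=z^{-1}G_0(-z^{-1})$. Inserting~\eqref{eqn:CF1}--\eqref{eqn:CF2} and simplifying, the common factor cancels again and one finds the pleasantly symmetric relation $G_1(z)/H_1(z)=H_0(-z)/G_0(-z)$, i.e. the reciprocal of the ratio from the first step evaluated at $-z$. Taking $z=\rme^{\rmi\omega/2}$, so that $-z=\rme^{\rmi(\omega/2+\pi)}$, and applying the first identity at argument $\omega/2+\pi$ gives $G_1(\rme^{\rmi\omega/2})/H_1(\rme^{\rmi\omega/2})=\rmi\,\rme^{\rmi\omega/4}\rme^{-\rmi\alpha_L(\omega/2+\pi)}$, where the factor $\rmi$ is the inverse of the phase $\rme^{-\rmi\pi/2}$ produced by the shift by $\pi$. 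Combining this with~\eqref{eqn:psi:filter} and the already-established ratio $\hat\phi_G(\omega/2)/\hat\phi_H(\omega/2)=\rme^{-\rmi\omega/4}\rme^{\rmi\beta_L(\omega/2)}$, the factors $\rme^{\pm\rmi\omega/4}$ cancel and I obtain $\hat\psi_G(\omega)=\rmi\,\rme^{\rmi(\beta_L(\omega/2)-\alpha_L(\omega/2+\pi))}\hat\psi_H(\omega)$, which is~\eqref{eqn:explicit-psi} with $\eta_L$ as in~\eqref{eq:eta-def}.

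The main obstacle is the careful handling of branches and conventions in the first step. Dividing numerator and denominator by $\cos^{2L+1}(\omega/4)$ must be justified for either sign of the cosine, and one has to verify that at the exceptional points where $\cos(\omega/4)=0$ the fraction equals $-1$, in agreement with the value $\rme^{\rmi\alpha_L(\omega)}=\rme^{\pm\rmi\pi}$ produced by the convention $\atan(\pm\infty)=\pm\pi/2$; this is exactly the reason that convention is imposed. The only other delicate bookkeeping is tracking the constant phase $\rme^{-\rmi\pi/2}$ arising from the shift $\omega/2\mapsto\omega/2+\pi$ in the last paragraph, which is the origin of the explicit factor $\rmi$ in~\eqref{eqn:explicit-psi}.
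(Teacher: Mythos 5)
Your proposal is correct and follows essentially the same route as the paper: the closed form of Proposition~\ref{prop:D} gives $z^{-L}D_L(1/z)/D_L(z)=\rme^{-\rmi\omega/2+\rmi\alpha_L(\omega)}$ on the unit circle (the paper's Lemma~\ref{lem:ratioD}), this is propagated through the infinite product~\eqref{eqn:phi:filter} to get~\eqref{eqn:explicit-phi}, and the high-pass relation evaluated at $\omega/2+\pi$ produces the factor $\rmi$ and the term $-\alpha_L(\omega/2+\pi)$ in $\eta_L$, exactly as in the paper's proof. Your extra remarks on the branch at $\cos(\omega/4)=0$ and on the convergence of $\beta_L$ are sound refinements of points the paper leaves implicit.
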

\begin{proof}
  See Section~\ref{sec:proofs-analytic}.
\end{proof}
Following Theorem~\ref{th:explicit-phi-psi}, we can write, for all $\omega\in\R$,
\begin{equation}\label{eq:uL}
\hat\psi_H(\omega)+\rmi\,\hat\psi_G(\omega)=\left(1-\rme^{\rmi \eta_L(\omega)}\right)\hat\psi_H(\omega)\;.
\end{equation}
This formula shows that the quasi-analytic property and the Fourier localization
of the complex wavelet $\psi_H+\rmi\,\psi_G$ can be respectively
described by
\begin{enumerate}[label=(\alph*)]
\item\label{item:3} how close the function $1-\rme^{\rmi \eta_L}$ is to the
  step function $2\1_{\R_+}$ (or $-\rme^{\rmi \eta_L}$ to the
  sign function);
\item\label{item:4} how localized the (real) wavelet $\psi_H$ is in the Fourier domain. 
\end{enumerate}
Property~\ref{item:4} is a well known feature of wavelets usually described by
the behavior of the wavelet at frequency 0 (e.g. $M$ vanishing moments implies a
behavior in $O(|\omega|^M)$) and by the polynomial decay at high frequencies.
This behavior depends on the wavelet filter (see~\cite{villemoes1992energy,eirola1992sobolev,ojanen2001orthonormal}) and a numerical study of property~\ref{item:4} is provided in Section~\ref{sec:simul}.

Note that, remarkably,  property~\ref{item:3}, \textbf{only depends on} $L$. Figure~\ref{fig:approx} displays the
function $1-\rme^{\rmi \eta_L}$ for various values of $L$. It illustrates the
fact that as $L$ grows, $1-\rme^{\rmi \eta_L}$ indeed gets closer and closer to
the step function $2\1_{\R_+}$.  We can actually prove the following result
which bounds how close the Fourier transform of the wavelet $\psi_H+\rmi\,\psi_G$ is to $2\1_{\R_+}\,\hat\psi_H$.

Denote, for all $\omega\in\R$ and $A\subset\R$, the distance of $\omega$ to $A$
by
\begin{equation}
\label{eqn:delta}
\delta(\omega,A)=\inf\left\{|\omega-x|~:~x\in A\right\}\;.
\end{equation}

\begin{thm}\label{th:main-result-analyticity}
  Under the same assumptions as Theorem~\ref{th:explicit-phi-psi}, we have, for
  all $\omega\in\R$,
  $$
  \left|\hat\psi_H(\omega)+\rmi\,\hat\psi_G(\omega) -
    2\1_{\R_+}(\omega)\,\hat\psi_H(\omega)\right|= U_L(\omega)
  \left|\hat\psi_H(\omega)\right|\;,
  $$
  where $U_L$ is a $\R\to[0,2]$ function satisfying, for
  all $\omega\in\R$,
    \begin{equation}
    \label{eq:final-analycity-result}
U_L(\omega) \leq
    2\sqrt{2}
\left(\log_2\left(\frac{\max(4\pi,|\omega|)}{2\pi}\right)+2\right)\,\left(1-\frac{\delta(\omega,4\pi\Z)}{\max(4\pi,|\omega|)}\right)^{2L+1}
 \;.
\end{equation}
\end{thm}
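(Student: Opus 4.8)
The plan is to convert the vectorial statement into a scalar one about the single phase $\eta_L$. Using $1-2\1_{\R_+}(\omega)=-\sign(\omega)$ for $\omega\neq0$, identity~\eqref{eq:uL} gives $\hat\psi_H+\rmi\hat\psi_G-2\1_{\R_+}\hat\psi_H=-(\rme^{\rmi\eta_L(\omega)}+\sign(\omega))\,\hat\psi_H(\omega)$, so that
\[ U_L(\omega)=\left|\rme^{\rmi\eta_L(\omega)}+\sign(\omega)\right|=2\left|\sin\!\left(\tfrac12\eta_L(\omega)-\tfrac\pi2\1_{\R_+}(\omega)\right)\right| . \]
Thus $U_L\in[0,2]$ is automatic, and since the right-hand side of~\eqref{eq:final-analycity-result} already exceeds $2$ at every point of $4\pi\Z$ (there $\delta(\omega,4\pi\Z)=0$ and the logarithmic factor is $\geq1$), only $\omega\notin4\pi\Z$ and $\omega\neq0$ require work. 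Everything reduces to showing that $\rme^{\rmi\eta_L(\omega)}$ is close to $-\sign(\omega)$.

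Next I would treat $\rme^{\rmi\eta_L}$ scale by scale. From~\eqref{eq:beta-def}--\eqref{eq:eta-def}, $\rme^{\rmi\eta_L(\omega)}=\rme^{-\rmi\alpha_L(\omega/2+\pi)}\prod_{j\geq1}\rme^{\rmi\alpha_L(2^{-j-1}\omega)}$. Using $\sin(\atan t)=t/\sqrt{1+t^2}$ in~\eqref{eq:alpha-def} yields the exact per-scale identities $|\sin(\alpha_L(\nu)/2)|=|\sin(\nu/4)|^{2L+1}/\sqrt{\sin^{4L+2}(\nu/4)+\cos^{4L+2}(\nu/4)}$ and its companion with $\cos$. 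Hence each factor $\rme^{\rmi\alpha_L(\nu)}$ is close to the sign $\epsilon_L(\nu)\in\{-1,+1\}$ it approaches as $L\to\infty$ (namely $+1$ iff $|\tan(\nu/4)|<1$, i.e.\ $\delta(\nu,4\pi\Z)<\pi$), with, after bounding the denominator by $\sqrt{a^2+b^2}\geq(a+b)/\sqrt2$ and using $\min(|\tan\theta|,|\cot\theta|)=\tan(\delta(\theta,\tfrac\pi2\Z))$,
\[ \left|\rme^{\rmi\alpha_L(\nu)}-\epsilon_L(\nu)\right|\leq 2\sqrt2\,\tan^{2L+1}\!\left(\tfrac14\,\delta(\nu,2\pi\Z)\right). \]
The $\sqrt2$ here is exactly the one appearing in the theorem. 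Applying $|\prod a_k-\prod b_k|\leq\sum|a_k-b_k|$ to unit-modulus numbers (the product of signs is well defined since $\epsilon_L(2^{-j-1}\omega)=+1$ for all large $j$) then gives $U_L(\omega)\leq 2\sqrt2\sum_{\nu}\tan^{2L+1}(\tfrac14\delta(\nu,2\pi\Z))$, the sum ranging over $\nu=\omega/2+\pi$ and $\nu=2^{-j-1}\omega$, $j\geq1$, provided the limiting product of signs reproduces $-\sign(\omega)$.

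Two ingredients then remain. First, a \emph{sign identity}: $\epsilon_L(\omega/2+\pi)\prod_{j\geq1}\epsilon_L(2^{-j-1}\omega)=-\sign(\omega)$. This is what guarantees that it is $\pi\1_{\R_+}(\omega)$ that $\eta_L$ approaches, and it should follow from the reciprocal relation $\alpha_L(\pi+a)+\alpha_L(\pi-a)\equiv\pi\pmod{2\pi}$ already used in Theorem~\ref{th:explicit-phi-psi} (itself a consequence of $\tan(\tfrac\pi4+u)\tan(\tfrac\pi4-u)=1$); concretely it amounts to counting the parity of the scales whose rescaled argument falls, modulo $4\pi$, at distance more than $\pi$ from $4\pi\Z$. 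Second, a \emph{uniform geometric bound}: for every scale,
\[ \tan\!\left(\tfrac14\,\delta(2^{-j-1}\omega,2\pi\Z)\right)\leq 1-\frac{\delta(\omega,4\pi\Z)}{\max(4\pi,|\omega|)}, \]
together with the analogue for the coarse scale $\omega/2+\pi$, which converts every per-scale tangent into the common base of~\eqref{eq:final-analycity-result}. Finally, since $\delta(2^{-j-1}\omega,2\pi\Z)\leq2^{-j-1}|\omega|$ the fine-scale tangents decay geometrically once $2^{-j-1}|\omega|\lesssim\pi$, so only $O\!\left(\log_2(\max(4\pi,|\omega|)/2\pi)\right)$ scales contribute at the level of the base; summing these, the single coarse scale, and a geometric remainder produces the factor $\log_2(\max(4\pi,|\omega|)/(2\pi))+2$.

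The hard part will be the interface between these last two steps. The sign identity requires controlling, for \emph{every} $\omega$, the parity of the resonant scales, a genuinely combinatorial statement about the dyadic orbit of $\omega$ relative to the grids $4\pi\Z$ and $2\pi\Z$; and the geometric inequality must hold simultaneously at all scales with the one right-hand side $1-\delta(\omega,4\pi\Z)/\max(4\pi,|\omega|)$. Both degenerate precisely on $4\pi\Z$, where a resonant scale forces a per-scale tangent to equal $1$ and $\epsilon_L$ to be ambiguous; this is consistent because the target bound is vacuous there, but it means the two estimates must be established carefully away from that grid while remaining uniform up to it.
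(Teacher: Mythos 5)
Your outline follows the paper's proof almost step for step: reduce via~\eqref{eq:uL} to bounding $\bigl|1-\rme^{\rmi\eta_L}-2\1_{\R_+}\bigr|$, approximate each factor $\rme^{\rmi\alpha_L(\nu)}$ by a $\pm1$-valued limit with error $2\sqrt2\,\Delta^{2L+1}(\nu)$ where $\Delta(\nu)=\min(|\tan(\nu/4)|,|\tan(\nu/4)|^{-1})=\tan\bigl(\tfrac14\delta(\nu,2\pi\Z)\bigr)$, telescope the product of unit-modulus factors, and then sum the per-scale errors using the fact that only $O(\log_2(\max(4\pi,|\omega|)/2\pi))$ scales are non-negligible while the remaining ones sum geometrically (the paper's Lemmas~\ref{lem:alpha}, \ref{lem:delta}, \ref{ineq:key} and \ref{lem:sumDeltaBound}). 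Your per-scale constant $2\sqrt2$, your identification of the base $1-\delta(\omega,4\pi\Z)/\max(4\pi,|\omega|)$, and your count $\log_2(\cdot)+2$ (log-many resonant scales, plus the tail, plus the coarse factor at $\omega/2+\pi$) all match the paper.

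The one genuine incompleteness is the sign identity $\epsilon(\omega/2+\pi)\prod_{j\geq1}\epsilon(2^{-j-1}\omega)=-\sign(\omega)$, which you correctly isolate as the crux but only assert "should follow" from a reciprocal relation for $\alpha_L$ and a parity count of resonant scales. Note that this identity has nothing to do with $L$ or with $\alpha_L$: it is a statement purely about the $(4\pi)$-periodic square wave $\mci$ under dyadic rescaling, and the parity-counting route you sketch is exactly where a sign error is easiest to make (and where the asymmetry between $\omega>0$ and $\omega<0$ must emerge). The paper disposes of it in three lines (Lemma~\ref{lem:approx-u-Linfty}) via the multiplicative relation $\mci(\omega)=\mci(2\omega+\pi)\,\mci(\omega+\pi)$, checked on one period, which turns the infinite product into a telescoping one converging to $\pm\mci(\omega+\pi)$ according to the sign of $\omega$; you would need to supply this (or an equivalent argument) for the proof to be complete. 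A minor secondary point: your claim that the single bound $\tan(\tfrac14\delta(2^{-j-1}\omega,2\pi\Z))\leq1-\delta(\omega,4\pi\Z)/\max(4\pi,|\omega|)$ holds \emph{at every scale} is stronger than what the scaling inequality $\delta(2^{-j}\omega,\pi+2\pi\Z)\geq2^{-j}\delta(\omega,4\pi\Z)$ delivers for large $j$; the paper only needs it for $j\leq\iota(\omega)$ and handles the remaining scales by the geometric-sum lemma, which is in fact what your final summation does, so this does not affect the result.
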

\begin{proof}
  See Section~\ref{sec:proofs-analytic}.
\end{proof}
This result provides a control over the difference between the Fourier
transform $\hat\psi_H+\rmi\,\hat\psi_G$ of the complex wavelet and
the Fourier transform $2\1_{\R_+}\,\hat\psi_H$ of the analytic signal
associated to $\psi_H$. In particular, as
$L\to\infty$, the relative difference
$U_L={\left|\hat\psi_H+\rmi\,\hat\psi_G-2\1_{\R_+}\hat\psi_H\right|}/{\left|\hat\psi_H\right|}$
converges to zero exponentially fast on any compact
subsets that do not intersect $4\pi\Z$. 

\begin{figure}[!ht]
\begin{center}
\includegraphics[scale=0.4]{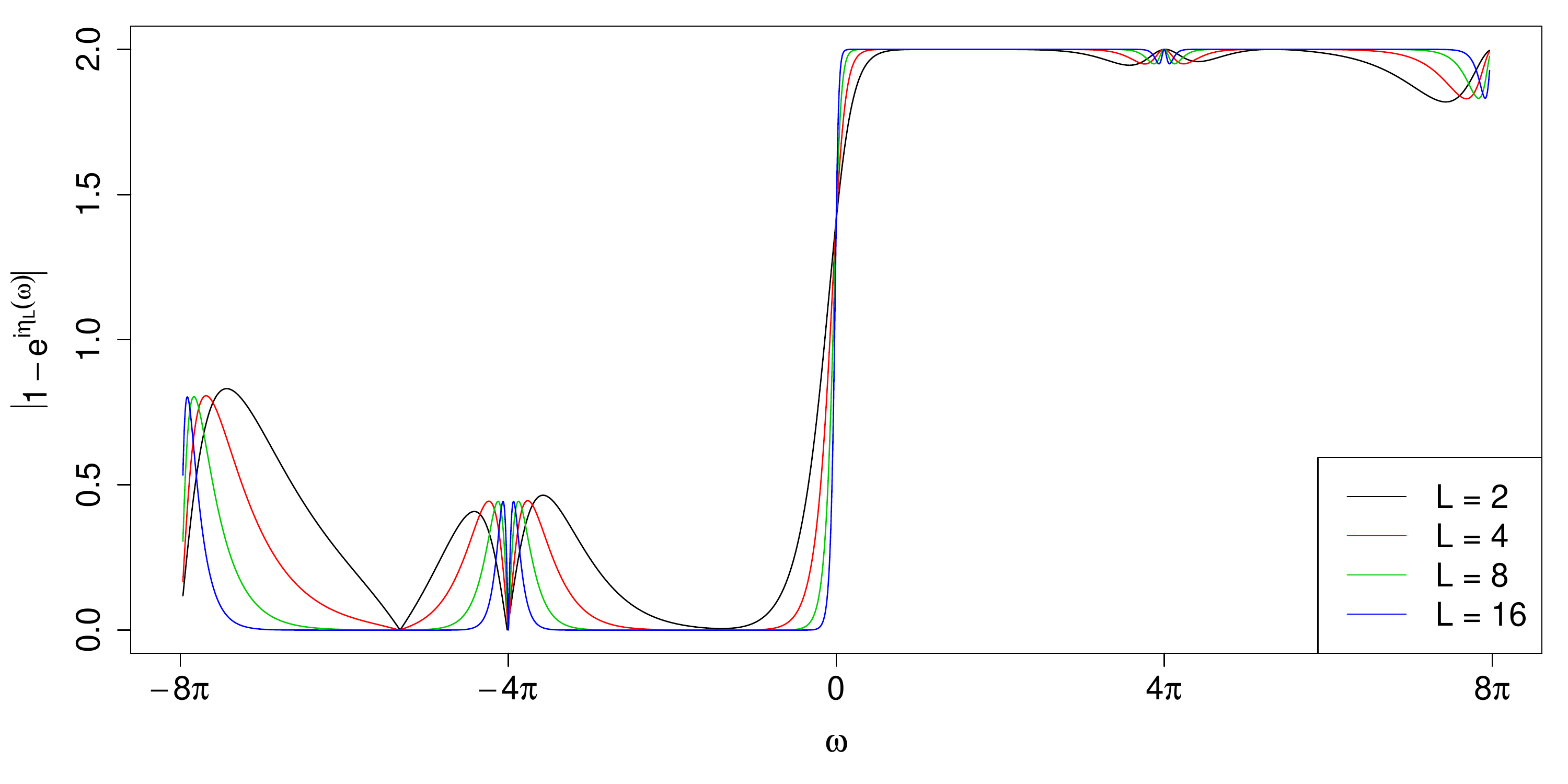}
\end{center}
\caption{Plots of the function
  $\omega\mapsto\lvert  1-\rme^{\rmi \eta_L(\omega)}\rvert$ for $L=2$, 4, 8,  16.}
\label{fig:approx}
\end{figure}
 
\section{Solutions with perfect reconstruction}
\label{sec:perfect_reconstruction}

Let us now follow the path paved by \cite{Selesnick-thiran} to select $Q$
appearing in the factorization~\eqref{eqn:VM} of the common factor $F$ to
impose $M$ vanishing moments.  First observe that,
under~\eqref{eqn:CF1},~\eqref{eqn:CF2} and~\eqref{eqn:VM},
the perfect reconstruction conditions~\eqref{eqn:PR-G} and~\eqref{eqn:PR-H}
both follow from 
\begin{align}
  \label{eqn:PR}
  R(z)S(z)+R(-z)S(-z)=2 \;,
\end{align}
where we have set $R(z)=Q(z)Q(1/z)$ and  $S(z)=(2+z+1/z)^M D_{L}(z)D_{L}(1/z)$. 

To achieve~\eqref{eqn:PR}, the following procedure is proposed in
\cite{Selesnick-thiran}, which follows the approach in \cite{Daubechies92}
adapted to the common factor constraint in~\eqref{eqn:CF1}.
\begin{enumerate}[label=\textbf{Step \arabic*}]
\item\label{item:Rstep} Find $R$ with finite, real and symmetric impulse
  response satisfying~\eqref{eqn:PR}.
\item\label{item:Qstep}  Find a real polynomial $Q(1/z)$ satisfying the factorization $R(z)=Q(z)Q(1/z)$.
\end{enumerate}
However, in \cite{Selesnick-thiran}, the existence of solutions $R$ and
$Q$ is not proven, although numerical procedures indicate that solutions
can be exhibited.  We shall now fill this gap and show the existence of
such solutions for any integers $M,L\geq1$.

We first establish the set of solutions for $R$.
\begin{prop}
  \label{prop:R}
  Let $L$ and $M$ be two positive integers.
  Let $D_{L}$ be defined as in Proposition~\ref{prop:D} and let
  $S(z)=(2+z+1/z)^M D_{L}(z)D_{L}(1/z)$. Then the two following assertions
  hold.
  \begin{enumerate}[label=(\roman*)]
  \item\label{item:1}   There exists a unique real polynomial $r$
    of degree
  at most $M+L-1$ such
  that $R(z)=r\left(\frac{2+z+1/z}{4}\right)$ 
satisfies~\eqref{eqn:PR}
  for all $z\in\C^*$.
\item\label{item:2} For any real polynomial $p$, the function
  $R(z)=p\left(\frac{2+z+1/z}{4}\right)$ satisfies~\eqref{eqn:PR} on
  $z\in\C^*$ if and only if it satisfies
\begin{equation}
  \label{eq:P-def-with-q}
p(y)=r(y)+s(1-y)\,q(y)\;,
\end{equation}
where
\begin{equation}
  \label{eq:PS-def}
  s(y)=y^M\sum_{n=0}^L{{2L+1}\choose{2n}}y^n\;,
\end{equation}
and $q$ is any real polynomial satisfying $q(1-y)=-q(y)$.
  \end{enumerate}
\end{prop}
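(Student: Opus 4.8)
The plan is to linearize the perfect reconstruction condition \eqref{eqn:PR} through the change of variable $y=\frac{2+z+1/z}{4}$, under which the reflection $z\mapsto -z$ becomes the affine involution $y\mapsto 1-y$. First I would use the closed form \eqref{eq:D-expression-closed-form} of Proposition~\ref{prop:D} to express $S$ purely in terms of $y$. Writing $z=\rme^{\rmi\omega}$ and factoring $1\pm\rme^{\rmi\omega/2}$, a direct computation gives $D_L(z)D_L(1/z)=\frac{2^{2L}}{(2L+1)^2}\,P(y)$ with $P(y)=\sum_{n=0}^{L}\binom{2L+1}{2n}y^n$, so that $S(z)=C\,s(y)$, where $C=2^{2M+2L}/(2L+1)^2$ and $s$ is the polynomial \eqref{eq:PS-def}. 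As both sides are Laurent polynomials agreeing on the unit circle, this identity extends to all $z\in\C^*$. For $R(z)=p(y)$ we then have $R(-z)=p(1-y)$ and $S(-z)=C\,s(1-y)$, so \eqref{eqn:PR} is equivalent to the polynomial identity $p(y)s(y)+p(1-y)s(1-y)=K$ with $K=2/C$, an identity in $y$ over $\C$.

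The crux of the argument --- and the step I expect to be the main obstacle --- is to prove that $s(y)$ and $s(1-y)$ are coprime in $\R[y]$. Since $s(y)=y^M P(y)$, its roots are $0$ (of multiplicity $M$) together with the roots of $P$, while those of $s(1-y)$ are $1$ and the reflected values $1-\rho$. I would locate the roots of $P$ explicitly: solving $P(y)=0$ through $\left(\tfrac{1+\sqrt y}{1-\sqrt y}\right)^{2L+1}=-1$ yields $\sqrt y=\rmi\tan(\theta/2)$ for $\theta=\tfrac{(2k+1)\pi}{2L+1}$, hence the $L$ roots $y=-\tan^2(\theta/2)$ are real and strictly negative. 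Since $P(0)=1$ and $P(1)=2^{2L}\neq0$, neither $0$ nor $1$ is a shared root; and because every root of $P$ is negative whereas every root of $P(1-\cdot)$ exceeds $1$, the two root sets are disjoint, so coprimality follows.

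With coprimality in hand, assertion~\ref{item:1} is obtained from the B\'ezout identity combined with the reflection symmetry. There is a unique pair $(a,b)$ of real polynomials with $\deg a,\deg b<M+L$ such that $a(y)s(y)+b(y)s(1-y)=K$; applying $y\mapsto 1-y$ and invoking the uniqueness of this pair forces $b(y)=a(1-y)$, so $r:=a$ satisfies $r(y)s(y)+r(1-y)s(1-y)=K$ with $\deg r\le M+L-1$. Uniqueness in this degree range is then immediate: the difference $d$ of two solutions satisfies $d(y)s(y)=-d(1-y)s(1-y)$, forcing $s(1-y)\mid d(y)$; as $\deg s(1-y)=M+L$ exceeds $\deg d$, this gives $d=0$.

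Finally, for assertion~\ref{item:2} I would subtract the identity for $r$ from that for an arbitrary solution $p$ and analyze the homogeneous equation for $e:=p-r$, namely $e(y)s(y)=-e(1-y)s(1-y)$. Coprimality yields $s(1-y)\mid e(y)$; writing $e(y)=s(1-y)q(y)$ and substituting back forces $q(y)=-q(1-y)$, while the converse computation shows that every antisymmetric $q$ produces a solution. This delivers precisely the parametrization \eqref{eq:P-def-with-q} and completes the proof.
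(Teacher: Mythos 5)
Your proof is correct and follows essentially the same route as the paper's: reduce \eqref{eqn:PR} to the one-variable Bezout identity $p(y)s(y)+p(1-y)s(1-y)=\const$ via $y=(2+z+1/z)/4$, establish coprimality of $s(y)$ and $s(1-y)$ by locating the roots of $\sum_n\binom{2L+1}{2n}y^n$ on the negative real axis, and exploit uniqueness of the minimal-degree Bezout pair together with the $y\mapsto 1-y$ symmetry. The only (immaterial) difference is that you compute the roots of that polynomial explicitly, whereas the paper's proof of this proposition only needs their negativity (Lemma~\ref{lem:zerosPS}), deferring the explicit formula to Lemma~\ref{lem:roots}.
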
 
\begin{proof}
  See Section~\ref{sec:proofs-perfect-reconstruction}.
\end{proof}

Proposition~\ref{prop:R} provides a justification of~\ref{item:Rstep}. 
In particular, a natural candidate for~\ref{item:Rstep} is 
  $R(z)=r\left(\frac{2+z+1/z}{4}\right)$. Now, by the Riesz Lemma (see
e.g. \cite[Lemma 6.1.3]{Daubechies92}), the factorization of~\ref{item:Qstep} holds if and
only if $R(z)$ takes its values in $\R_+$ on the unit circle $\{z\in\C~:~\lvert z\rvert=1\}$, or
equivalently $r(y)\geq0$ for all $y\in[0,1]$. Although easily verifiable in
practice (using a numerical computation of the roots of $r$), checking this
property theoretically for all integers $L,M$ is not yet achieved.

Nevertheless we next prove that~\ref{item:Qstep} can always be carried out for
any $L,M\geq1$, at least by modifying $r$ into a polynomial $p$ of the
form~\eqref{eq:P-def-with-q} with a conveniently chosen $q$.  

\begin{thm}\label{thm:Q}
  Let $L$ and $M$ be two positive integers and let $r$ and $s$ be the
  polynomials defined as in Proposition~\ref{prop:R}. Then there exists a real
  polynomial $q$ such that $R(z)=[r+s\,q]\left(\frac{2+z+1/z}{4}\right)$ is a
  solution of~\eqref{eqn:PR} and satisfies the factorization $R(z)=Q(z)Q(1/z)$
  where $Q(1/z)$, real polynomial of $z$, does not vanish on the unit
  circle.
\end{thm}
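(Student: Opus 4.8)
The plan is to push everything onto the real variable $y=\frac{2+z+1/z}{4}$, which sweeps out $[0,1]$ as $z$ runs over the unit circle, and then to use the complete description of perfect‑reconstruction solutions provided by Proposition~\ref{prop:R}. Writing $R(z)=p\!\left(\frac{2+z+1/z}{4}\right)$, a direct computation from the closed form of $D_L$ in Proposition~\ref{prop:D} gives $S(z)=c\,s\!\left(\frac{2+z+1/z}{4}\right)$ for a positive constant $c$, and since $z\mapsto-z$ corresponds to $y\mapsto1-y$, condition~\eqref{eqn:PR} becomes the polynomial identity $p(y)s(y)+p(1-y)s(1-y)=\kappa$ for a constant $\kappa>0$. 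By the second part of Proposition~\ref{prop:R}, its solutions are exactly $p=r+s(1-\cdot)\,q$ with $q$ a real polynomial satisfying $q(1-y)=-q(y)$. On the other hand, by the Riesz lemma recalled after Proposition~\ref{prop:R}, the factorization $R(z)=Q(z)Q(1/z)$ with $Q(1/z)$ \emph{not} vanishing on the unit circle holds as soon as $R>0$ on the unit circle, i.e. as soon as $p(y)>0$ for every $y\in[0,1]$ (a strictly positive $R$ forces the Riesz factor to be zero‑free on the circle). The theorem thus reduces to the claim: there is an antisymmetric polynomial $q$ for which $p=r+s(1-\cdot)\,q$ is strictly positive on $[0,1]$.

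The first key observation is that a strictly positive solution certainly exists among \emph{rational} functions. Since the coefficients of $s(y)=y^M\sum_{n}\binom{2L+1}{2n}y^n$ are nonnegative, $s\ge0$ on $[0,1]$ with its only zero at $y=0$; hence $s(1-\cdot)$ vanishes on $[0,1]$ only at $y=1$, and $\Sigma:=s+s(1-\cdot)$ is strictly positive on $[0,1]$. The symmetric rational function $p^\star:=\kappa/\Sigma$ then satisfies $p^\star(y)s(y)+p^\star(1-y)s(1-y)=\kappa\bigl(s(y)+s(1-y)\bigr)/\Sigma(y)=\kappa$ (using $\Sigma(1-y)=\Sigma(y)$), so it is a rational perfect‑reconstruction solution, and it is continuous with $p^\star\ge\epsilon>0$ on the compact set $[0,1]$. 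A short computation using the identity $r(y)s(y)+r(1-y)s(1-y)=\kappa$ shows that $r+s(1-\cdot)\,q^\star=p^\star$, where $q^\star(y):=\bigl(r(1-y)-r(y)\bigr)/\Sigma(y)$; this $q^\star$ is antisymmetric and, crucially, continuous on all of $[0,1]$ because $\Sigma$ is zero‑free there.

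The second step is to replace $q^\star$ by a genuine polynomial while keeping it antisymmetric and keeping the resulting $p$ positive. By the Weierstrass theorem I approximate $q^\star$ uniformly on $[0,1]$ by polynomials $\pi_n$, and then antisymmetrise, setting $q_n(y):=\frac12\bigl(\pi_n(y)-\pi_n(1-y)\bigr)$; these are antisymmetric polynomials, and $q_n\to q^\star$ uniformly because $q^\star(1-\cdot)=-q^\star$. Consequently $p_n:=r+s(1-\cdot)\,q_n$ are bona fide perfect‑reconstruction solutions by Proposition~\ref{prop:R}, and $p_n\to p^\star$ uniformly on $[0,1]$ since $s(1-\cdot)$ is bounded there. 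As $p^\star\ge\epsilon>0$ on $[0,1]$, for $n$ large enough $p_n>0$ on the whole interval. Fixing such an $n$ and taking $q:=q_n$ gives the required polynomial: $R(z)=[r+s(1-\cdot)q]\!\left(\frac{2+z+1/z}{4}\right)$ solves~\eqref{eqn:PR} and is strictly positive on the unit circle, so the Riesz lemma produces a real polynomial $Q$ with $R(z)=Q(z)Q(1/z)$ and $Q(1/z)$ nonvanishing on the unit circle.

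The main obstacle is exactly the global positivity on $[0,1]$: the canonical low‑degree solution $r$ need not be nonnegative — precisely the point the authors could not settle directly — so one cannot simply factor $R(z)=r(\cdot)$. The device that circumvents this is the explicit symmetric positive rational solution $p^\star=\kappa/\Sigma$, combined with the fact that the affine family of \emph{polynomial} solutions approximates it arbitrarily well, the antisymmetrisation being what keeps the approximants inside the solution family. The two points to watch are that $q^\star$ remains continuous up to the endpoints (guaranteed by $\Sigma>0$, which also absorbs the zero of $s(1-\cdot)$ at $y=1$) and that uniform positivity on the compact interval automatically controls the endpoints $y=0,1$, i.e. $z=\pm1$, so that no separate boundary analysis is needed.
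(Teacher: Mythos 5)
Your proof is correct, and while it follows the same high-level reduction as the paper (via Proposition~\ref{prop:R} and the Riesz lemma, everything comes down to finding an antisymmetric polynomial $q$ making $p=r+s(1-\cdot)\,q$ positive on $[0,1]$), the key device is genuinely different. The paper rewrites the positivity requirement as a two-sided band condition on $q$ over $y\in[1/2,1)$, namely $-r(y)/s(1-y)\leq q(y)\leq -r(y)/s(1-y)+C(L,M)^2/(s(y)s(1-y))$, shows the band has width bounded below by a positive constant, checks compatibility with $q(1/2)=0$ at the symmetry point, and then invokes Stone--Weierstrass; it is left somewhat implicit which continuous selection inside the band is being approximated, and the band's endpoints degenerate as $y\to1$ where $s(1-y)\to0$. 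You instead exhibit an \emph{explicit} target: the symmetric rational solution $p^\star=\kappa/\Sigma$ with $\Sigma=s+s(1-\cdot)>0$ on $[0,1]$, whose associated $q^\star=(r(1-\cdot)-r)/\Sigma$ is antisymmetric and continuous up to the endpoints (your identity $r+s(1-\cdot)q^\star=\kappa/\Sigma$ follows cleanly from the Bezout relation $r(y)s(y)+r(1-y)s(1-y)=\kappa$), and then approximate $q^\star$ by antisymmetrized polynomials, using uniform convergence and the strict lower bound $p^\star\geq\epsilon$ to conclude. This buys a more transparent and self-contained approximation step — there is no need to analyse the band near $y=1$ — at the cost of introducing the auxiliary rational solution; the paper's version is shorter on the page but relies on the reader to supply the continuous selection. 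One cosmetic point: the theorem's notation $[r+s\,q]$ should be read as the family $r+s(1-\cdot)\,q$ of~\eqref{eq:P-def-with-q}, exactly as you (and the paper's own proof) interpret it.
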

\begin{proof}
  See Section~\ref{sec:proofs-perfect-reconstruction}.
\end{proof}
 
Proposition~\ref{prop:R} and Theorem~\ref{thm:Q} allows one to carry out the
usual program to the construction of compactly supported orthonormal wavelet
bases, as described in \cite{Daubechies92}. Hence we get the following.
\begin{cor}
    Let $L$ and $M$ be two positive integers. Let $Q$ be as in
    Theorem~\ref{thm:Q}. Define $F$ as in~\eqref{eqn:VM} and let $\{H_0,G_0\}$
    be the $L$-approximate 
  Hilbert wavelet filter pair associated to $F$. Then the wavelet bases
  $(\psi_{H,j,k})$ and   $(\psi_{G,j,k})$ are orthonormal bases of $L^2(\R)$. 
\end{cor}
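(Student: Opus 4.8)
The plan is to reduce the statement to the classical orthonormality criterion of Cohen and Mallat for compactly supported wavelet bases (a sufficient form is stated e.g. in \cite[Section~6.3]{Daubechies92}): if $m_0$ is a trigonometric polynomial with $m_0(0)=1$, satisfying the quadrature identity $|m_0(\omega)|^2+|m_0(\omega+\pi)|^2=1$, and such that $m_0(\omega)\neq0$ for all $\omega\in[-\pi/2,\pi/2]$, then the scaling function defined through the infinite product as in~\eqref{eqn:phi:filter} lies in $L^2(\R)$, its integer translates are orthonormal, and the associated mother wavelet generates an orthonormal basis of $L^2(\R)$. I would apply this criterion separately to $m_0^H(\omega)=2^{-1/2}H_0(\rme^{\rmi\omega})$ and to $m_0^G(\omega)=2^{-1/2}G_0(\rme^{\rmi\omega})$.

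The normalization and quadrature conditions are essentially already in hand. Since $\{H_0,G_0\}$ is an $L$-approximate Hilbert wavelet filter pair we have $H_0(1)=G_0(1)=\sqrt2$, hence $m_0^H(0)=m_0^G(0)=1$. The quadrature identities are exactly the perfect reconstruction conditions~\eqref{eqn:PR-H} and~\eqref{eqn:PR-G}; as recalled in Section~\ref{sec:perfect_reconstruction} these follow from~\eqref{eqn:PR} under~\eqref{eqn:CF1},~\eqref{eqn:CF2} and~\eqref{eqn:VM}, and Theorem~\ref{thm:Q} provides precisely a $Q$ for which $R(z)=Q(z)Q(1/z)$ solves~\eqref{eqn:PR}. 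Thus both filters satisfy the first two hypotheses of the criterion.

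The main work, and the main obstacle, is the non-vanishing condition, which is exactly where the extra conclusion of Theorem~\ref{thm:Q} is used. Writing $H_0(z)=Q(z)(1+1/z)^M D_L(z)$, I would treat the three factors on the unit circle. The factor $(1+\rme^{-\rmi\omega})^M$ vanishes only at $\omega=\pi$, hence is nonzero on $[-\pi/2,\pi/2]$. The factor $D_L$ never vanishes on the unit circle: plugging the closed form~\eqref{eq:D-expression-closed-form} of Proposition~\ref{prop:D} with $z^{1/2}=\rme^{\rmi\omega/2}$ into $|D_L(\rme^{\rmi\omega})|$ gives $|D_L(\rme^{\rmi\omega})|=\frac{2^{2L}}{2L+1}\bigl(\cos^{2(2L+1)}(\omega/4)+\sin^{2(2L+1)}(\omega/4)\bigr)^{1/2}$, which is strictly positive since $\cos(\omega/4)$ and $\sin(\omega/4)$ cannot vanish simultaneously. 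Finally, Theorem~\ref{thm:Q} guarantees that $Q(1/z)$ does not vanish on the unit circle; as $Q$ has real coefficients, $Q(\rme^{-\rmi\omega})=\overline{Q(\rme^{\rmi\omega})}$, so $Q(\rme^{\rmi\omega})\neq0$ for every $\omega$ as well. Hence $m_0^H(\omega)\neq0$ on $[-\pi/2,\pi/2]$ (in fact on all of $(-\pi,\pi)$), and the criterion applies to $H_0$. For $G_0$, relation~\eqref{eqn:CF2} gives $|G_0(\rme^{\rmi\omega})|=|F(\rme^{\rmi\omega})|\,|D_L(\rme^{-\rmi\omega})|$, so the very same three factors control the modulus and the identical argument yields $m_0^G(\omega)\neq0$ on $[-\pi/2,\pi/2]$. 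Applying the criterion to both filters then shows that $(\psi_{H,j,k})$ and $(\psi_{G,j,k})$ are orthonormal bases of $L^2(\R)$, which is the claim. The only genuinely delicate point is to recognize that the non-vanishing of $Q$ on the unit circle — built into Theorem~\ref{thm:Q} precisely for this purpose — is exactly what Cohen's condition demands; the remaining verifications are immediate consequences of Proposition~\ref{prop:D} and of the definition of the filter pair.
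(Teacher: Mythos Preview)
Your proposal is correct and follows precisely the route the paper indicates: the paper does not give a separate proof of the Corollary but simply states that ``Proposition~\ref{prop:R} and Theorem~\ref{thm:Q} allow one to carry out the usual program to the construction of compactly supported orthonormal wavelet bases, as described in \cite{Daubechies92}.'' You have spelled out that program explicitly via Cohen's sufficient condition, verifying the normalization, the quadrature identities (from~\eqref{eqn:PR} via Theorem~\ref{thm:Q}), and the non-vanishing of $m_0$ on $[-\pi/2,\pi/2]$ by factoring $H_0$ and $G_0$ and treating each factor; your identification that the non-vanishing of $Q$ on the unit circle in Theorem~\ref{thm:Q} is exactly what Cohen's condition needs is the key observation the paper leaves implicit.
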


Observe that Theorem~\ref{thm:Q} states the existence of the polynomial $Q$ but
does not define it in a unique way. We explain why in the following remark.
\begin{rem}\label{rem:solut-minimal-degree}
  Since $r$ in Proposition~\ref{prop:R} is defined uniquely, it follows that,
  if we require that all the roots of $Q$ are inside the unit circle, there is
  \emph{at most one solution} for $Q$ with degree at most $K=M+L-1$, which
  correspond to the case $q=0$. This solution, when it exists, is usually
  called the \emph{minimal phase, minimal degree} solution.  However we were
  not able to prove that $r$ does not vanish on $[0,1]$, which is a necessary
  and sufficient condition to obtain such a minimal degree solution for
  $Q$. Hence we instead prove the existence of solutions for $Q$
  by allowing $q$ to be non-zero.
\end{rem}

\section{Numerical computation of approximate Hilbert wavelet filters}
\label{sec:simul}

\subsection{State of the art}

Let $M$ and $L$ be positive integers.  Then, by Theorem~\ref{thm:Q}, we can
define the polynomial $Q$ and derive from its coefficients the impulse response
of the corresponding $L$-approximate Hilbert wavelet filter pair with $M$
vanishing moments and perfect reconstruction. 

We now discuss the numerical computation of the coefficients of $Q$ in the case
where the polynomial $r$ defined by Proposition~\ref{prop:R} does not vanish on
$[0,1]$. Indeed suppose that one can obtain a numerical computation of this
polynomial $r$. Then the roots of $r$ can also be computed by a numerical
solver and, as explained in Remark~\ref{rem:solut-minimal-degree}, if they do
not belong on $[0,1]$ (which has to be checked taking into account the possible
numerical errors), it only remains to factorize $R(z)=r((2+z+1/z)/4)$ into $Q(z)Q(1/z)$ by
 separating the roots conveniently. Taking all roots of modulus inferior to 1
leads to ``mid-phase'' wavelets.  There are other ways of factorizing $R$,
namely ``min-phase'' wavelets, see \cite{Selesnick-thiran}, leading to wavelets
with Fourier transform of the same magnitude but with
different phases. This difference can be useful in some multidimensional
applications where the phase is essential.

Hence the computation of the wavelet filters boils down to the numerical
computation of the polynomial $r$ defined by Proposition~\ref{prop:R}. In
 \cite{Selesnick-thiran}, this computation is achieved by using the
following algorithm.
\begin{itemize}
\item Let $s_1=(\binom{k}{2M})_{k=0,\dots,2M}$ and $s_2=\begin{pmatrix}
d_L(0)&\dots&d_L(L)
\end{pmatrix} \star \begin{pmatrix}
d_L(L)&\dots&d_L(0)
\end{pmatrix}$, where $\star$ denotes the convolution for sequences. Then
$S(z)=(2+z+1/z)^M D_L(z)D_L(1/z)=\sum_{n=0}^{2(M+L)}s(n)z^{n-(M+L)}$ with
$s=s_1\star s_2$. The filter $s$ has length $2(M+L)+1$.
\item The filter $r$ is such that $s\star r$ is half-band. Let $T$ denote the
  Toeplitz matrix associated with $\begin{pmatrix} 0 \dots 0 & s\end{pmatrix}$,
  vector of length $4(M+L)+1$, that is, $T_{k,j}=s(1+(k-j))$ if $0\leq k-j\leq
  2(M+L)$ and $T_{k,j}=0$ else. We introduce $C$ the matrix obtained by keeping
  only the even rows of $T$, which has size $(2(M+L)-1)\times (2(M+L)-1)$. Then
  $r$ is the solution of the equation
  \begin{equation}
    \label{eq:selesnick-algo-inversion-step}
  Cr=b  
  \end{equation}
  with $b=\begin{pmatrix}
0&\dots&0&1&0&\dots&0
\end{pmatrix}$ a $2(M+L)-1$-vector with a $1$ at the middle ({\it i.e.} at $(M+L)$-th position). 
\end{itemize}
We implemented this linear inversion method but it turned out that the corresponding linear
equation is ill posed for too high values of $M$ and $L$ (for instance
$M=L=7$). For smaller values of $L$ and $M$, we recover the wavelet filters of the
\texttt{hilbert.filter} program of the \texttt{R}-package \texttt{waveslim}
computed only for $(M,L)$ equal to  (3,3), (3,5), (4,2) and (4,4), see
\cite{whitcher15-waveslim-r-package}.

\subsection{A recursive approach to the computation of the Bezout minimal
  degree solution}

We propose now a new method for computing the $L$-approximate \emph{common-factor}
wavelet pairs with $M$ vanishing moments under the  perfect reconstruction
constraint. As explained previously, this computation reduces to determining
the coefficients of the polynomial $r$ defined in Proposition~\ref{prop:R}.
Our approach is intended as an alternative to the linear system resolution
step of the approach proposed in \cite{Selesnick-thiran}.
Since our algorithm is recursive, to avoid any ambiguity, we add the subscripts
$L,M$ for denoting the polynomials $r$ and $s$ appearing in~\ref{prop:R}. That
is, we set  $$s_{L,M}(y)=y^M\sum_{n=0}^L{{2L+1}\choose{2n}}y^n$$ and $r_{L,M}$ is
the unique polynomial of degree at most $M+L-1$ satisfying the Bezout equation
\begin{enumerate}[label=${[B(L,M)]}$,labelsep=1.4cm,align=left]
\item \centering$r_{L,M}(1-y) s_{L,M}(1-y)+r_{L,M}(y) s_{L,M}(y)=(2L+1)^2\,2^{-2L-2M+1}\;.$
\end{enumerate}
We propose to compute $r_{L,M}$ for all $L\geq1$, $M\geq0$ by using the following result.
\begin{prop}\label{pro:expr-rL0}
Let $L\geq1$.  Define
\begin{equation}
  \label{eq:s-roots}
y_{k,L}=-\tan^2\left(\frac{\pi(2k+1)}{2(2L+1)}\right)\,,\qquad k\in \{0,\cdots,L-1\}\;.  
\end{equation}
Then the solution $r_{L,0}$ of the Bezout equation $[B(L,0)]$ is
given by
\begin{equation}\label{eq:interp-rLM}
  r_{L,0}(y)=(2L+1)^2\,2^{-2L+1}\sum_{k=0}^{L-1}
\frac{\prod_{m\neq k} (y-(1-y_{m,L}))}{s_{L,0}(1-y_{k,L})\prod_{m\neq k} (y_{m,L}-y_{k,L})}\;.
\end{equation}
Moreover, for all $M\geq1$, we have the following relation between the solution
of $[B(L,M)]$ and that of $[B(L,M-1)]$:
\begin{equation}
  \label{eq:recursiv-formula-r}
4 \,y\, r_{L,M}(y)=r_{L,M-1}(y)\,-\,2^{-2L}\,r_{L,M-1}(0)\,(1-2y)\,s_{L,M-1}(1-y)\;.
\end{equation}
\end{prop}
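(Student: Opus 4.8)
The plan is to handle the two assertions separately: the closed form for $r_{L,0}$ by a root-counting argument built around the Lagrange interpolation hidden in~\eqref{eq:interp-rLM}, and the recursion by exhibiting an explicit candidate and invoking the uniqueness in assertion~\ref{item:1} of Proposition~\ref{prop:R}. First I would identify the roots of $s_{L,0}$. Expanding $(1\pm\sqrt y)^{2L+1}$ and keeping the even powers (as in the remark following Proposition~\ref{prop:D}) gives $s_{L,0}(y)=\sum_{n=0}^L\binom{2L+1}{2n}y^n=\tfrac12\big[(1+\sqrt y)^{2L+1}+(1-\sqrt y)^{2L+1}\big]$, so $s_{L,0}(y)=0$ amounts to $\big(\tfrac{1+\sqrt y}{1-\sqrt y}\big)^{2L+1}=-1$. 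Solving this M\"obius equation and using that $\tan^2$ is even and $\pi$-periodic shows that the $2L+1$ formal solutions collapse to exactly the $L$ distinct negative values $y_{k,L}$ of~\eqref{eq:s-roots}; together with the leading coefficient $\binom{2L+1}{2L}=2L+1$ this yields $s_{L,0}(y)=(2L+1)\prod_{k=0}^{L-1}(y-y_{k,L})$.

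Next I would read~\eqref{eq:interp-rLM} as a Lagrange interpolation formula: up to the constants $(2L+1)^2\,2^{-2L+1}/s_{L,0}(1-y_{k,L})$, its summands are the Lagrange basis polynomials attached to the nodes $1-y_{k,L}$. Hence $r_{L,0}$ so defined has degree at most $L-1$ and satisfies $r_{L,0}(1-y_{k,L})=(2L+1)^2\,2^{-2L+1}/s_{L,0}(1-y_{k,L})$ for each $k$. To finish the base case I would set $\Phi(y)=r_{L,0}(1-y)s_{L,0}(1-y)+r_{L,0}(y)s_{L,0}(y)-(2L+1)^2\,2^{-2L+1}$, a polynomial of degree at most $2L-1$ that is invariant under $y\mapsto 1-y$. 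Evaluating at a root $y_{k,L}$ annihilates the second summand, while the interpolation property makes the first summand equal to $(2L+1)^2\,2^{-2L+1}$, so $\Phi(y_{k,L})=0$, and by symmetry $\Phi(1-y_{k,L})=0$ as well. Since the $y_{k,L}$ are negative while the $1-y_{k,L}$ exceed $1$, these $2L$ zeros are distinct, forcing the polynomial $\Phi$, of degree at most $2L-1$, to vanish identically; this is exactly $[B(L,0)]$.

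For the recursion I would define the candidate $\tilde r$ by $4y\,\tilde r(y)=r_{L,M-1}(y)-2^{-2L}r_{L,M-1}(0)(1-2y)s_{L,M-1}(1-y)$ and verify three points. It is a polynomial because the right-hand side vanishes at $y=0$: indeed $s_{L,M-1}(1)=\sum_{n=0}^L\binom{2L+1}{2n}=2^{2L}$, so the bracket collapses to $r_{L,M-1}(0)\big(1-2^{-2L}2^{2L}\big)=0$. A direct degree count then gives $\deg\tilde r\le M+L-1$. Finally, using $s_{L,M}(y)=y\,s_{L,M-1}(y)$ one rewrites $\tilde r(1-y)s_{L,M}(1-y)+\tilde r(y)s_{L,M}(y)$ through the products $(1-y)\tilde r(1-y)$ and $y\,\tilde r(y)$, substitutes the defining relation for $\tilde r$, and checks that the two contributions carrying $2^{-2L}r_{L,M-1}(0)(1-2y)s_{L,M-1}(y)s_{L,M-1}(1-y)$ cancel. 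What remains is $\tfrac14\big[r_{L,M-1}(1-y)s_{L,M-1}(1-y)+r_{L,M-1}(y)s_{L,M-1}(y)\big]$, which by $[B(L,M-1)]$ equals $\tfrac14(2L+1)^2\,2^{-2L-2M+3}=(2L+1)^2\,2^{-2L-2M+1}$, the right-hand side of $[B(L,M)]$. As $\tilde r$ has degree at most $M+L-1$ and solves $[B(L,M)]$, the uniqueness in assertion~\ref{item:1} of Proposition~\ref{prop:R} gives $\tilde r=r_{L,M}$, i.e.~\eqref{eq:recursiv-formula-r}.

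I expect the only real obstacle to lie in the base case: correctly translating $s_{L,0}$ into the M\"obius equation via the closed form of $D_L$, checking that the $2L+1$ roots on the unit circle collapse to precisely the $L$ values $y_{k,L}$, and confirming that the resulting $2L$ zeros of $\Phi$ are genuinely distinct (which is why the negativity of the $y_{k,L}$ matters). The recursion, by contrast, is a clean algebraic identity once $s_{L,M}=y\,s_{L,M-1}$ and the normalization $s_{L,M-1}(1)=2^{2L}$ are recognized.
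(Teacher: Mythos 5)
Your proof is correct. It rests on the same two ingredients as the paper's proof --- the identification of the roots of $s_{L,0}$ as the $y_{k,L}$ (the paper's Lemma on the roots of $\tilde s$) and the uniqueness statement of Proposition~\ref{prop:R} --- but it runs the logic in the opposite direction in both halves. For the base case, the paper starts from the solution $r_{L,0}$ whose existence is already guaranteed, evaluates $[B(L,0)]$ at the roots $y_{k,L}$ of $s_{L,0}$ to read off the values $r_{L,0}(1-y_{k,L})$, and concludes by Lagrange interpolation; you instead take the interpolating polynomial as a definition and verify that it solves $[B(L,0)]$ by counting the $2L$ distinct zeros (at the $y_{k,L}<0$ and the $1-y_{k,L}>1$) of a symmetric polynomial of degree at most $2L-1$. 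For the recursion, the paper observes that $4y\,r_{L,M}(y)$ solves $[B(L,M-1)]$ and then uses the description of \emph{all} solutions (assertion (ii) of Proposition~\ref{prop:R}) to pin down the correction term as $q(y)=q(0)(1-2y)$ with $q(0)=-2^{-2L}r_{L,M-1}(0)$; you instead verify directly that the explicit candidate solves $[B(L,M)]$ (using $s_{L,M}=y\,s_{L,M-1}$, the cancellation of the cross terms, and $s_{L,M-1}(1)=2^{2L}$) and invoke uniqueness. Your route is slightly more computational but bypasses assertion (ii) entirely; the paper's derivation has the advantage of explaining where the formulas come from rather than just certifying them. Both arguments are complete.
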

\begin{proof}
  See Appendix~\ref{sec:defin-polyn-r}.
\end{proof}
This result provides a recursive way to compute $r_{L,M}$ by starting with
$r_{L,0}$ using the interpolation formula~(\ref{eq:interp-rLM}) and then using
the recursive formula~(\ref{eq:recursiv-formula-r}) to compute
$r_{L,1},r_{L,2},\dots$ up to $r_{L,M}$. In contrast to the method of
\cite{Selesnick-thiran} which consists in solving a (possibly ill posed) linear
system, this method is only based on product and composition of polynomials. 

\subsection{Some numerical result on smoothness and  analyticity}
We now provide some numerical results on the quality of the analyticity of the
$L$-approximated Hilbert wavelet. All the numerical computations have been
carried out by the method of
\cite{Selesnick-thiran} which seems to be the one used by practitioners (as in
the software of \cite{whitcher15-waveslim-r-package}).
Recall that as established in
Theorem~\ref{th:main-result-analyticity}, for all $\omega\in\R$,
  $$
  \left|\hat\psi_H(\omega)+\rmi\,\hat\psi_G(\omega) -
    2\1_{\R_+}(\omega)\,\hat\psi_H(\omega)\right|= U_L(\omega)
  \left|\hat\psi_H(\omega)\right|\;,
  $$
  where $U_L$ is displayed in Figure~\ref{fig:regularity}. Thus the quality of
  analyticity relies on the behavior of $U_L$ but also of
  $\hat\psi_H(\omega)$. First, $\hat\psi_H(\omega)$ goes to $0$ when
  $\omega\to 0$ thanks to the property of $M$ vanishing moments given
  by~\eqref{eqn:VM}. Secondly, 
  $\lvert\hat\psi_H(\omega)\rvert$ decays to zero as $|\omega|$ goes to
  infinity. This last point is verified numerically, by the estimation of
  the Sobolev exponents of $\psi_H$ using \cite{ojanen2001orthonormal}'s
  algorithm. Values are given in Table~\ref{tab:sobolev}. For $M>1$ Sobolev
  exponents are greater than 1. Notice that ``min-phase'' and ``mid-phase''
  factorizations of $R$ have the same exponents since the methods do not change
  the magnitude of $\hat \psi_H+\rmi\hat\psi_G$.

\begin{table}[!h]
  \caption{Sobolev exponent estimated for $\psi_H$ functions. Dots correspond
    to configurations where numerical instability occurs in the numerical inversion
    of~(\ref{eq:selesnick-algo-inversion-step}).}
\label{tab:sobolev}
\begin{center}
\begin{tabular}{l|cccccccc}
M $\backslash$ L & 1 & 2 &3 & 4& 5 & 6 &7 &8 \\
\hline
1  & 0.60  & 0.72  & 0.81  & 0.89  & 0.94  & 0.98  & 0.99  & 1.00  \\  
2  & 1.11  & 1.23  & 1.34  & 1.44  & 1.54  & 1.63  & 1.73  & 1.82  \\  
3  & 1.52  & 1.64  & 1.74  & 1.83  & 1.92  & 2.01  & 2.09  & 2.17  \\  
4  & 1.87  & 1.98  & 2.07  & 2.16  & 2.24  & 2.32  & 2.40  & 2.48  \\  
5  & 2.19  & 2.29  & 2.37  & 2.45  & 2.53  & 2.60  & 2.68  & $\cdot $  \\  
6  & 2.48  & 2.57  & 2.65  & 2.72  & 2.80  & 2.87  & $\cdot $  & $\cdot$  \\  
7  & 2.74  & 2.83  & 2.91  & 2.98  & 3.05  & 3.12  & $\cdot $  & $\cdot$  \\  
8  & 3.00  & 3.09  & 3.16  & 3.23  & 3.29  & $\cdot $  & $\cdot $  & $\cdot$  \\ 
\end{tabular}
\end{center}
\end{table}

Figure~\ref{fig:regularity} displays the overall shapes of the Fourier
transforms $\hat\psi_H$ of orthonormal wavelets with \emph{common-factor} for various
values of $M$ and $L$. Their quasi-analytic counterparts
$\hat\psi_H(\omega)+\rmi\,\hat\psi_G(\omega)$ are plotted below in the same
scales. It illustrates the satisfactory quality of analityc approximation.

\begin{figure}[h]

\includegraphics[scale=0.43]{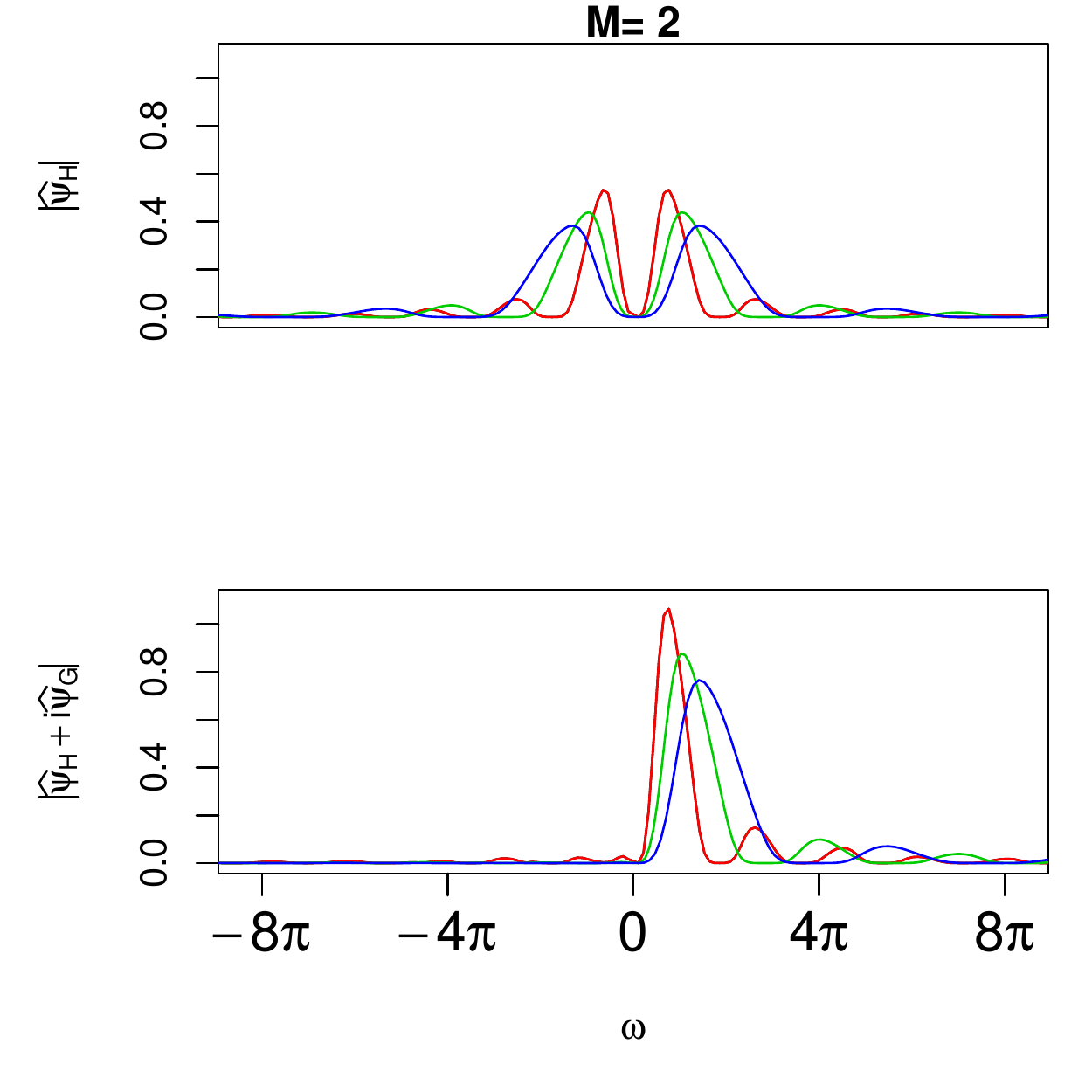}\includegraphics[scale=0.43]{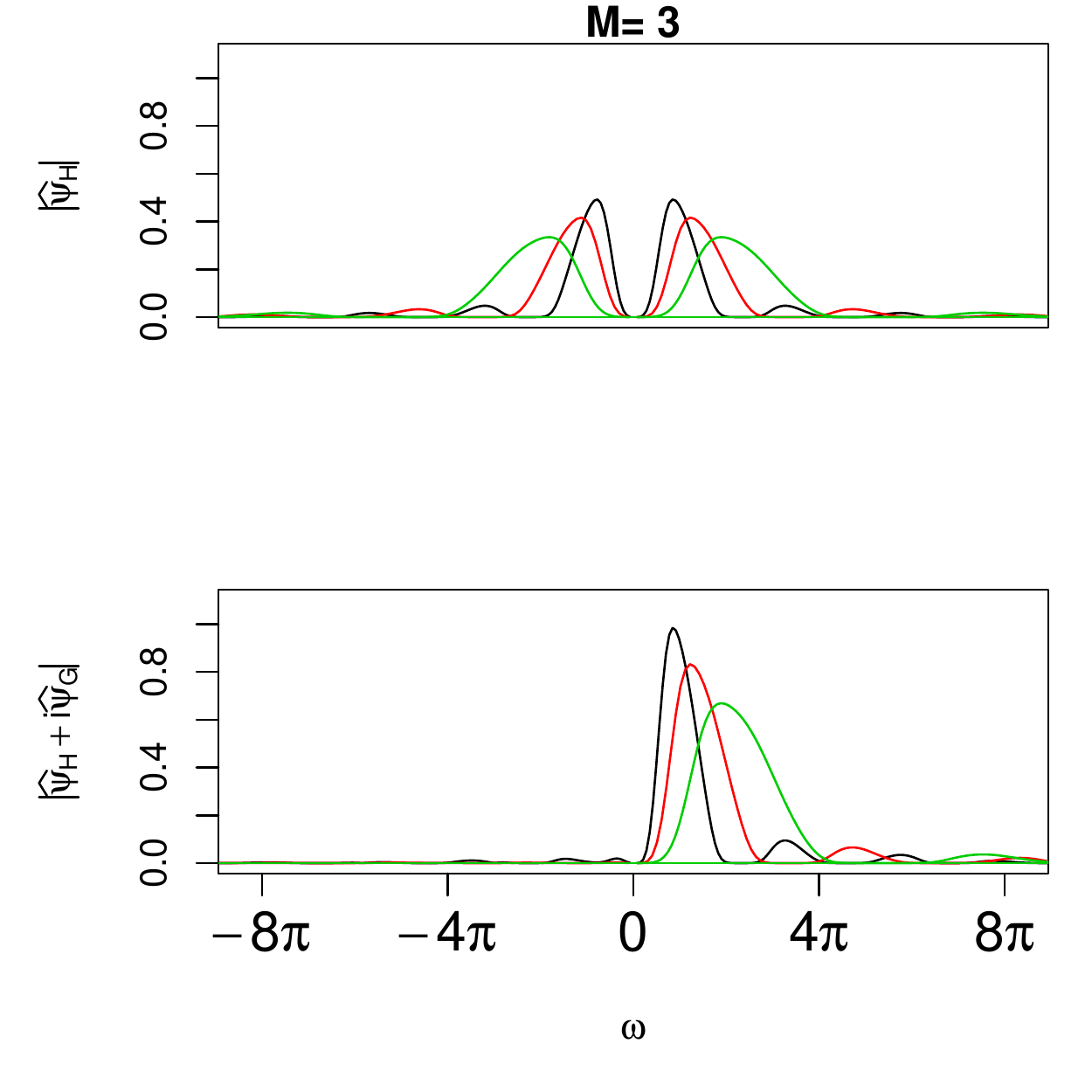}\includegraphics[scale=0.43]{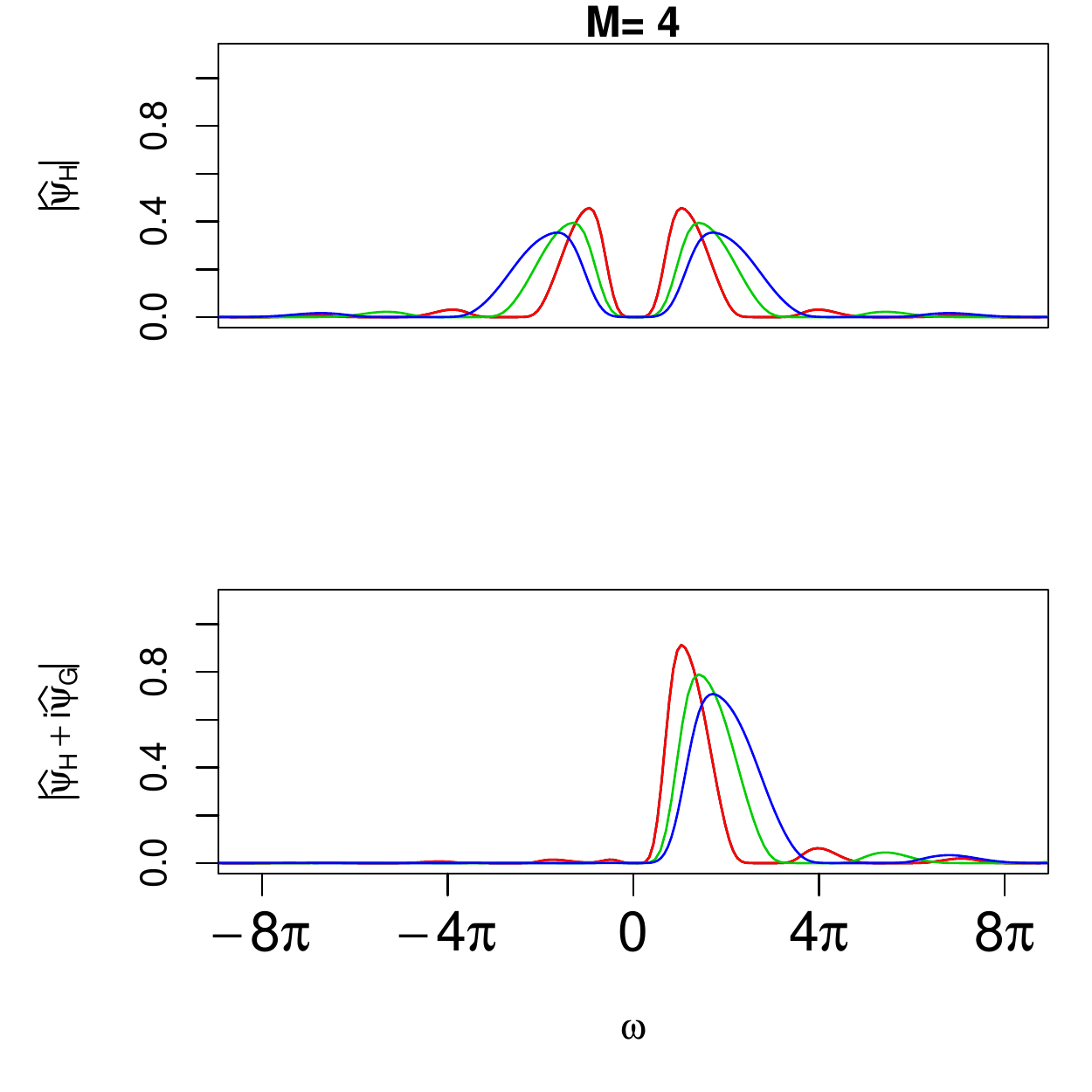}

\caption{Top row: Plots of $\lvert\hat\psi_H\rvert$ for $M=2$(left), 3 (center), 4 (right)
  and $L=2$ (black), 4 (red), 8 (green). Bottom row: same for  $\lvert\hat\psi_H+\rmi\hat\psi_G\rvert$.}
\label{fig:regularity}
\end{figure}

\cite{tay2006orthonormal} propose two objective measures of quality
based on the spectrum, 
\[
E_1 = \frac{\max \{\lvert\hat\psi_H(\omega)+\rmi\hat\psi_G(\omega)\rvert ,\,\omega<0 \}}{\max \{\lvert\hat\psi_H(\omega)+\rmi\hat\psi_G(\omega)\rvert ,\,\omega>0 \}}\,\text{~and~}
E_2 = \frac{\int_{\omega <0} \lvert\hat\psi_H(\omega)+\rmi\hat\psi_G(\omega)\rvert^2\,d\omega}{\int_{\omega >0} \lvert\hat\psi_H(\omega)+\rmi\hat\psi_G(\omega)\rvert^2\,d\omega}\,.
\]
Numerical values of $E_1$ and $E_2$ are computed using numerical evaluations of
$\hat\psi_H$ on a grid, and, concerning $E_2$, using Riemann sum approximations
of the integrals.  Such numerical computations of $E_1$ and $E_2$ are displayed
in Figure~\ref{fig:error} for various values of $M$ and $L$. The functions $E_1$ and $E_2$ are 
decreasing with respect to $L$ (which corresponds to the behaviour of
$U_L$). They are also decreasing with respect to $M$ (through the faster decay
of $\hat\psi_H$ around zero and infinity). Moreover, the values illustrate the good analyticity quality
of \emph{common-factor} wavelets. For example, values appear to be lower than those of
approximate analytic wavelets based on Bernstein polynomials given in
\cite{tay2006orthonormal}.
\begin{figure}[h]

\begin{center}
\includegraphics[scale=0.6]{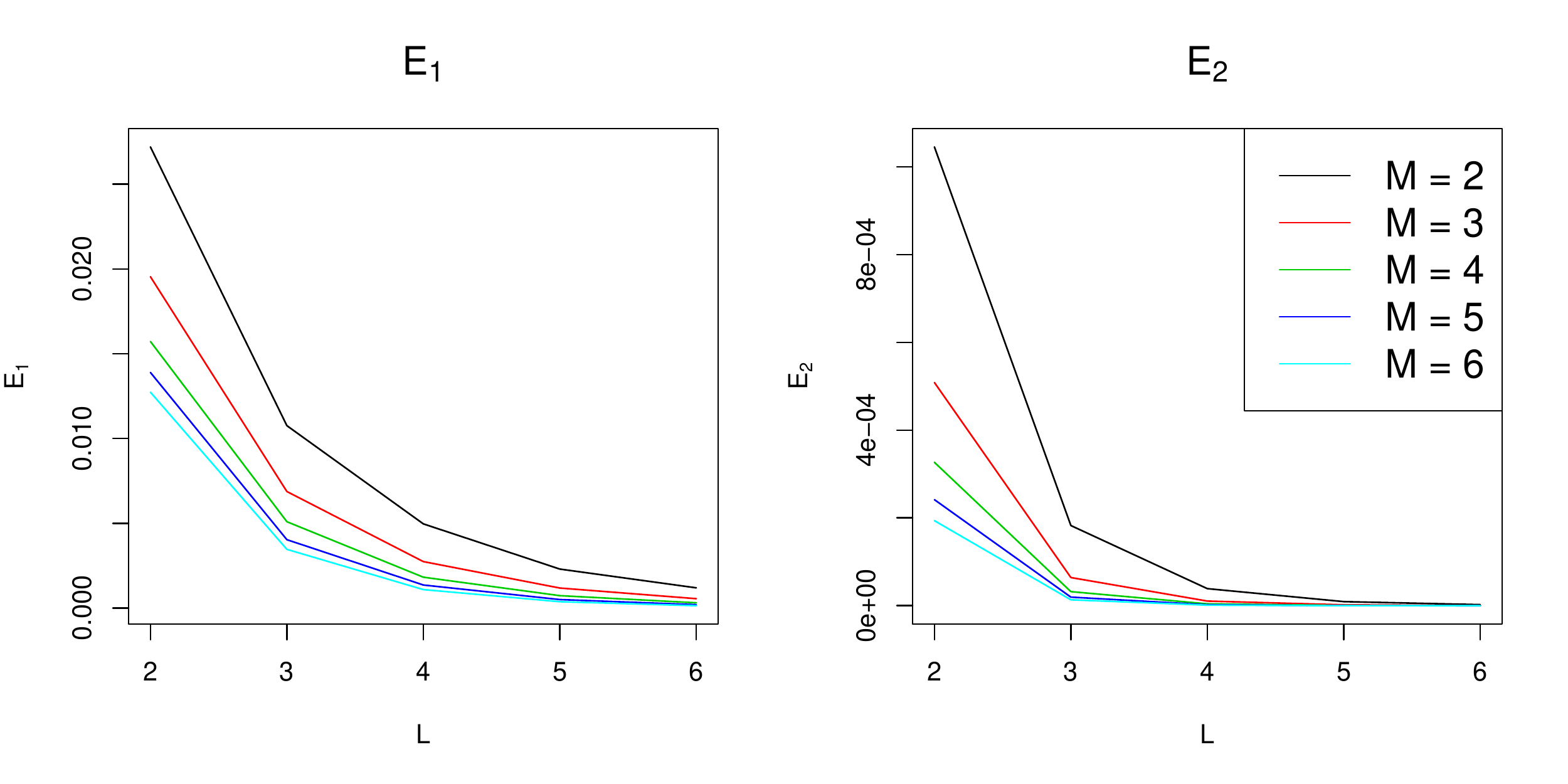}
\end{center}
\caption{Plot of $E_1$ and $E_2$ with respect to $L$ for different values of $M$.}
\label{fig:error}

\end{figure}

\section{Conclusion}

Approximate Hilbert pairs of wavelets are built using the \emph{common-factor}
approach. Specific filters are obtained under perfect reconstruction
conditions.  They depend on two integer parameters $L$ and $M$ which correspond
respectively to the order of the analytic approximation and the number of null
moments. We demonstrate that the construction of such wavelets is valid by
proving their existence for any parameters $L,M \geq 1$.  Our main contribution
in this paper is to provide an exact formula of the relation between the
Fourier transforms of the two real wavelets associated to the filters. This
expression allows us to evaluate the analyticity approximation of the wavelets,
{\it i.e.} to control the presence of energy at the negative frequency. This
result may be useful for applications, where the approximated analytic
properties of the wavelet have to be optimized, in addition to the usual
localization in time and frequency. Numerical simulations show that these
wavelets are easy to compute for not too large values of $L$ and $M$, and
confirm our theoretical findings, namely, that the analytic approximation
quickly sharpens as $L$ increases.

\section{Acknowledgements}

This research did not receive any specific grant from funding agencies in the public, commercial, or not-for-profit sectors.

\appendix

\section{Proofs of Section~\ref{sec:analytic}}
\label{sec:proofs-analytic}

\subsection{Proof of Proposition~\ref{prop:D}}

\begin{proof}[Proof of~\eqref{eq:D-expression-closed-form}]
Notice that $d(L)^{-1}z^LD_{L}(z)=\sum_{n=0}^L \frac{d(L-n)}{d(L)} z^n$ and that for all $n=0,\ldots,L-1$,
\begin{align*}
\frac{d(L-n)}{d(L)}&=\binom{L}{n}\prod_{\ell=L-n}^{L-1}\frac{2\ell+3}{2L-2\ell-1} \\
&=\binom{L}{n}\left(\prod_{k=1}^{n}{(2k-1)}\right)^{-1}\prod_{\ell=L-n+1}^{L}{(2\ell+1)} \\
&=\frac{L!}{n!(L-n)!}\frac{2^n n!}{(2n)!}\frac{(2L+1)!}{2^L L!}\frac{2^{L-n}(L-n)!}{(2L-2n+1)!} \\
&= \binom{2L+1}{2n}
\end{align*}
It is then easy to check that
\[
d(L)^{-1}z^L D_{L}(z)=\frac{1}{2}\left((1+z^{1/2})^{2L+1}+(1-z^{1/2})^{2L+1}\right).
\]
The fact that $d(L)=1/(2L+1)$ concludes the proof.
\end{proof}

\subsection{Technical results on $D$}

We first establish the following result, which will be useful to handle ratios
with $D_{L}(\rme^{i\omega})$.
\begin{lem}
\label{lem:minD}
  Let $L$ be a positive integer. Define $D_{L}$ as in
  Proposition~\ref{prop:D}. Then $D_{L}(z)$ does not vanish on the unit circle
  ($|z|=1$) and
$$
\min_{z\in\C~:~|z|=1} \left|D_{L}(z)\right|=|D_{L}(-1)|=\frac{2^L}{2L+1}
<\max_{z\in\C~:~|z|=1} \left|D_{L}(z)\right|=|D_{L}(1)|=\frac{2^{2L}}{2L+1}\;.
$$
\end{lem}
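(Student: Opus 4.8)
The plan is to work directly from the closed form in Proposition~\ref{prop:D}. Writing $z=\rme^{\rmi\omega}$ with $\omega\in(-\pi,\pi]$ and choosing $z^{1/2}=\rme^{\rmi\omega/2}$, I would use the half-angle identities $1+\rme^{\rmi\omega/2}=2\cos(\omega/4)\,\rme^{\rmi\omega/4}$ and $1-\rme^{\rmi\omega/2}=-2\rmi\sin(\omega/4)\,\rme^{\rmi\omega/4}$ to expand the bracket in~\eqref{eq:D-expression-closed-form}. Factoring out the common phase $\rme^{\rmi\omega(2L+1)/4}$, the bracket becomes a real term (a power of $2\cos(\omega/4)$) plus a purely imaginary term (a power of $2\sin(\omega/4)$). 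Since $z^{-L}$ and the phase factor have unit modulus, the cross terms drop out when taking the modulus and I obtain the clean expression
\[
\left|D_{L}(\rme^{\rmi\omega})\right|=\frac{2^{2L}}{2L+1}\sqrt{\cos^{4L+2}(\omega/4)+\sin^{4L+2}(\omega/4)}\;.
\]
The choice of square root is immaterial: the other choice merely swaps the two summands in the bracket, leaving the modulus unchanged.

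Next I would reduce the two extremal problems to a one-variable monotonicity statement. Setting $s=\cos^2(\omega/4)$, which runs over $[1/2,1]$ as $\omega$ runs over $(-\pi,\pi]$ (using that the radicand is even in $\omega$), the radicand equals $g(s)=s^{2L+1}+(1-s)^{2L+1}$. Then $g'(s)=(2L+1)\bigl(s^{2L}-(1-s)^{2L}\bigr)\geq0$ on $[1/2,1]$, so $g$ is nondecreasing there. Hence $g$ attains its minimum at $s=1/2$, i.e. $\omega=\pm\pi$ and $z=-1$, where $g(1/2)=2^{-2L}$, and its maximum at $s=1$, i.e. $\omega=0$ and $z=1$, where $g(1)=1$.

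Finally, substituting these extreme values back gives $\min_{|z|=1}|D_L(z)|=\frac{2^{2L}}{2L+1}\,2^{-L}=\frac{2^{L}}{2L+1}$, attained at $z=-1$, and $\max_{|z|=1}|D_L(z)|=\frac{2^{2L}}{2L+1}$, attained at $z=1$; these two values can also be read off directly by evaluating~\eqref{eq:D-expression-closed-form} at $z=\pm1$. Since the minimum is strictly positive, $D_L$ does not vanish on the unit circle. The only genuinely delicate step is the modulus computation: one must track the factor $(-\rmi)^{2L+1}=(-1)^L(-\rmi)$ carefully to confirm that the bracket splits into a real plus a purely imaginary part, so that no cross term survives. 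Once the radical formula is established, the monotonicity argument is elementary.
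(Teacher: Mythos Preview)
Your argument is correct and follows essentially the same route as the paper: both compute $\lvert D_L\rvert$ on the unit circle from the closed form of Proposition~\ref{prop:D}, obtaining the expression $\frac{2^{4L}}{(2L+1)^2}\bigl(\cos^{4L+2}(\omega/4)+\sin^{4L+2}(\omega/4)\bigr)$ for $\lvert D_L(\rme^{\rmi\omega})\rvert^2$ (the paper reaches it via $D_L(z)D_L(1/z)$ and a forward reference to~\eqref{eq:DDeq}, you by splitting the bracket into real and imaginary parts). The only minor difference is in the monotonicity step: the paper expands into $\sum_{n=0}^L\binom{2L+1}{2n}\cos^{2n}\theta$, where monotonicity in $\cos^2\theta$ is immediate from positivity of the coefficients, whereas you keep the unexpanded form and differentiate; both are equally valid and elementary.
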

\begin{proof}
  Since $D_{L}(1/z)$ is a real polynomial of $z$, we have for all $z\in\C$ such
  that $|z|=1$, $\left|D_{L}(z)\right|^2=D_{L}(z)D_{L}(1/z)$. Moreover, as shown in the
  proof of Proposition~\ref{prop:R}, if $z=\rme^{2\rmi\theta}$ with
  $\theta\in\R$, then $\left|D_{L}(z)\right|^2$ reads as in~\eqref{eq:DDeq}, which
  is minimal and maximal for $\cos(\theta)=0$ and $1$, respectively.
\end{proof}
We now study $z^{-L}\frac{D_{L}(1/z)}{D_{L}(z)}$ on the circle.
\begin{lem}
\label{lem:ratioD}
For all $z=\rme^{\rmi\omega}$ with $\omega\in\R$, we have
\begin{align}
\rme^{-\rmi\omega
  L}\frac{D_{L}(\rme^{-\rmi\omega})}{D_{L}(\rme^{\rmi\omega})}&=\rme^{-\rmi\omega/2+\rmi \alpha_L(\omega)}\;,
\end{align}
where $\alpha_L$ is the function defined on $\R$ by~\eqref{eq:alpha-def}. 
\end{lem}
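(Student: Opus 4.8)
The plan is to evaluate the closed form of $D_{L}$ from Proposition~\ref{prop:D} on the unit circle. Setting $z=\rme^{\rmi\omega}$ and choosing $z^{1/2}=\rme^{\rmi\omega/2}$, the half-angle identities $1+\rme^{\rmi\theta}=2\cos(\theta/2)\rme^{\rmi\theta/2}$ and $1-\rme^{\rmi\theta}=-2\rmi\sin(\theta/2)\rme^{\rmi\theta/2}$ (with $\theta=\omega/2$) put the two summands $(1\pm z^{1/2})^{2L+1}$ into explicit polar form. This is exactly the (exact) first line of the computation already performed in the remark following Proposition~\ref{prop:D}. After simplifying $(-2\rmi)^{2L+1}=-2^{2L+1}(-1)^L\rmi$ and collecting the common phase $\rme^{-\rmi\omega L}\rme^{\rmi\omega(2L+1)/4}=\rme^{-\rmi\omega(2L-1)/4}$, this yields
\[
D_{L}(\rme^{\rmi\omega})=\frac{2^{2L}}{2L+1}\,\rme^{-\rmi\omega(2L-1)/4}\,\Bigl[\cos^{2L+1}(\omega/4)-(-1)^L\rmi\,\sin^{2L+1}(\omega/4)\Bigr]\;.
\]

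Next I would use that $D_{L}$ has real coefficients, so $D_{L}(\rme^{-\rmi\omega})=\overline{D_{L}(\rme^{\rmi\omega})}$. Writing $a=\cos^{2L+1}(\omega/4)$ and $b=(-1)^L\sin^{2L+1}(\omega/4)$, the displayed formula reads $D_{L}(\rme^{\rmi\omega})=C\,\rme^{-\rmi\omega(2L-1)/4}(a-\rmi b)$ with $C>0$ a real constant, and its conjugate is $C\,\rme^{\rmi\omega(2L-1)/4}(a+\rmi b)$. Forming the ratio, the factor $C$ cancels and the phases combine:
\[
\frac{D_{L}(\rme^{-\rmi\omega})}{D_{L}(\rme^{\rmi\omega})}=\rme^{\rmi\omega(2L-1)/2}\,\frac{a+\rmi b}{a-\rmi b}\;.
\]
Multiplying by $\rme^{-\rmi\omega L}$ and checking the elementary identity $-\omega L+\omega(2L-1)/2=-\omega/2$ reduces the whole lemma to proving $\frac{a+\rmi b}{a-\rmi b}=\rme^{\rmi\alpha_L(\omega)}$.

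For this last identity I would observe that $a-\rmi b=\overline{a+\rmi b}$, so that $\frac{a+\rmi b}{a-\rmi b}=\rme^{2\rmi\,\arg(a+\rmi b)}$ as soon as $a+\rmi b\neq0$; and $a,b$ never vanish simultaneously, since $\cos$ and $\sin$ have no common zero. Because $b/a=(-1)^L\tan^{2L+1}(\omega/4)$ and $\atan$ is odd, it then remains to match $2\arg(a+\rmi b)$ with $\alpha_L(\omega)=2(-1)^L\atan(\tan^{2L+1}(\omega/4))$ modulo $2\pi$.

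The main obstacle, and the only genuinely delicate point, is the branch bookkeeping for $\arg$. When $\cos(\omega/4)>0$ one has $\arg(a+\rmi b)=\atan(b/a)=(-1)^L\atan(\tan^{2L+1}(\omega/4))$ directly. When $\cos(\omega/4)<0$ the point $a+\rmi b$ lies in a different quadrant, so $\arg(a+\rmi b)$ differs from $\atan(b/a)$ by $\pm\pi$; but this adds only a multiple of $2\pi$ to $2\arg(a+\rmi b)$ and hence leaves $\rme^{2\rmi\arg(a+\rmi b)}$ unchanged. Finally, at the isolated points where $\cos(\omega/4)=0$ one checks directly that $a+\rmi b=\rmi b$ gives $\frac{a+\rmi b}{a-\rmi b}=-1$, consistent with the convention $\atan(\pm\infty)=\pm\pi/2$ adopted to define $\alpha_L$. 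Collecting the three cases gives $\frac{a+\rmi b}{a-\rmi b}=\rme^{\rmi\alpha_L(\omega)}$ for every $\omega\in\R$, which completes the proof.
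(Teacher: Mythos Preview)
Your proof is correct and follows essentially the same route as the paper: both evaluate the closed form of $D_L$ from Proposition~\ref{prop:D} on the unit circle via half-angle identities and then reduce to the standard identity $\frac{1+\rmi a}{1-\rmi a}=\rme^{2\rmi\atan(a)}$ with $a=(-1)^L\tan^{2L+1}(\omega/4)$. The only cosmetic differences are that the paper forms the ratio $z^{-L}D_L(1/z)/D_L(z)$ symbolically before specializing to $z=\rme^{\rmi\omega}$ (rather than computing $D_L(\rme^{\rmi\omega})$ and then conjugating), and that it factors out $\cos^{2L+1}(\omega/4)$ to land directly on the $\frac{1+\rmi a}{1-\rmi a}$ form, thereby sidestepping your $\arg$ branch discussion; on the other hand, your treatment of the case $\cos(\omega/4)=0$ is more explicit than the paper's.
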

\begin{proof}
Observe that, for all $z\in\C^*$, denoting by $z^{1/2}$ any of the two roots of $z$,
\begin{align*}
z^{-L}\frac{D_{L}(1/z)}{D_{L}(z)}&=z^{L}\frac{(1+z^{-1/2})^{2L+1}+(1-z^{-1/2})^{2L+1}}{(1+z^{1/2})^{2L+1}+(1-z^{1/2})^{2L+1}}\\
&=z^{-1/2}\frac{(1+z^{1/2})^{2L+1}+(z^{1/2}-1)^{2L+1}}{(1+z^{1/2})^{2L+1}-(z^{1/2}-1)^{2L+1}}
\end{align*}
Set now $z=\rme^{\rmi\omega}$. We deduce that
\[
\rme^{-\rmi\omega L}\frac{D_{L}(\rme^{-\rmi\omega})}{D_{L}(\rme^{\rmi\omega})}=
\rme^{-\rmi\omega/2}\frac{\rme^{\rmi\omega(2L+1)/4}\cos(\omega/4)^{2L+1}(1+\rmi(-1)^L\tan(\omega/4)^{2L+1})}{\rme^{\rmi\omega(2L+1)/4}\cos(\omega/4)^{2L+1}(1-\rmi(-1)^L\tan(\omega/4)^{2L+1})}.
\]
The result then follows from the classical result
$\frac{1+\rmi a}{1-\rmi a}= \rme^{2\,\rmi\,\atan(a)}$ with here $a=(-1)^L\tan(\omega/4)^{2L+1}$.
\end{proof}

\subsection{Proof of Theorem \ref{th:explicit-phi-psi}}

\begin{proof}[\textbf{Proof of equality \eqref{eqn:explicit-phi}}]
Equation~\eqref{eqn:phi:filter} provides the relation between $\hat\phi_H$ and $H_0$. 
The same relation holds between $\hat\phi_G$ and $G_0$. It follows with
 Lemma~\ref{lem:minD},~\eqref{eqn:CF1}
and~\eqref{eqn:CF2}, that, for all $\omega\in\R$,
\[
\hat\phi_G(\omega)=\hat\phi_H(\omega)\prod_{j=1}^\infty \left[\rme^{-\rmi\omega2^{-j}L}\frac{D_{L}(\rme^{-\rmi\omega2^{-j}})}{D_{L}(\rme^{\rmi\omega2^{-j}})}\right]\;.
\]
Applying Lemma~\ref{lem:ratioD}, we get that, for all $\omega\in\R$,
\begin{align*}
\hat\phi_G(\omega)&=\hat\phi_H(\omega)\prod_{j=1}^\infty \rme^{-\rmi\omega 2^{-j}/2+\rmi \alpha_L(\omega 2^{-j})}\\
&=\hat\phi_H(\omega)\exp\left(-\rmi\omega/2 \sum_{j=1}^\infty 2^{-j}+\rmi \sum_{j=1}^\infty \alpha_L(\omega 2^{-j})\right)\;.
\end{align*}
We thus obtain~\eqref{eqn:explicit-phi} using the definition of $\beta_L$ given by~\eqref{eq:beta-def}.
\end{proof}

\begin{proof}[\textbf{Proof of equality \eqref{eqn:explicit-psi}}]
First observe that the 
relation between the high-pass filters $G_1$ and $H_1$ follows from that
between the low-pass filter $G_0$ and $H_0$, namely
\[
G_1(z)=(-z)^L\frac{D_{L}(-z)}{D_{L}(-1/z)}H_1(z).
\]
The relationship between $\hat\psi_G$ and $\hat\phi_G$ is given
by~\eqref{eqn:psi:filter} (exchanging $G$ and $H$), yielding,
for all $\omega\in\R$,
\[
\hat\psi_G(\omega)=2^{-1/2}\,
(-1)^L\rme^{\rmi \omega L/2}
\frac{D_{L}(-\rme^{\rmi\omega/2})}{D_{L}(-\rme^{-\rmi\omega/2})}\, H_1(\rme^{\rmi\omega/2})\hat\phi_G(\omega/2)\;.
\]
We now replace $\hat\phi_G$ by the expression obtained
in~\eqref{eqn:explicit-phi} and thanks to~\eqref{eqn:psi:filter},
\[
\hat\psi_G(\omega)=
(-1)^L\rme^{\rmi \omega L/2}
\frac{D_{L}(-\rme^{\rmi\omega/2})}{D_{L}(-\rme^{-\rmi\omega/2})}\rme^{-\rmi\omega/4} \rme^{\rmi\beta_L(\omega/2)}\hat\psi_H(\omega)\;.
\]
Since $D_{L}$ has a real impulse response and $-1=\rme^{\rmi\pi}=\rme^{-\rmi\pi}$, Lemma~\ref{lem:ratioD} gives that, for all
$\omega\in\R$,
$$
(-1)^L\rme^{\rmi \omega L/2}
\frac{D_{L}(-\rme^{\rmi\omega/2})}{D_{L}(-\rme^{-\rmi\omega/2})}
=\overline{\rme^{-\rmi L(\omega /2+\pi)}
\frac{D_{L}(\rme^{-\rmi(\omega/2+\pi)})}{D_{L}(\rme^{\rmi(\omega/2+\pi)})}}
= \rmi\; \rme^{\rmi\omega/4-\rmi \alpha_L(\omega/2+\pi)}\;.
$$
Hence, we finally get that, for all
$\omega\in\R$,
$$
\hat\psi_G(\omega)=\rmi\;\rme^{-\rmi \alpha_L(\omega/2+\pi)+\rmi\beta_L(\omega/2)}\hat\psi_H(\omega)\;.
$$
\eqref{eqn:explicit-psi} is proved.
\end{proof}

\subsection{Proof of Theorem~\ref{th:main-result-analyticity}}

\subsubsection*{Approximation of $1-\rme^{\rmi\eta_L}$}

We first state a simple result on the function $\rme^{\rmi\alpha_L}$.
\begin{lem}
  \label{lem:alpha}
  Let $L$ be a positive integer.
  The function $\alpha_L$ defined by~\eqref{eq:alpha-def} is $(4\pi)$-periodic.
  Moreover $\rme^{\rmi\alpha_L}$ is continous on $\R$ and we have, for all
  $\omega\in\R$, 
  \begin{equation}
    \label{eq:alpha-bound}
    \left|\rme^{\rmi\alpha_L(\omega)}-\mci(\omega)\right|\leq 2\sqrt2\;\Delta^{2L+1}(\omega)\;, 
  \end{equation}
  where
  \begin{equation}
    \label{eq:Komega}
    \mci(\omega)=
    \begin{cases}
      1&\text{ if $\omega\in[-\pi,\pi)+4\pi\Z$}\\
      -1&\text{ otherwise,}
    \end{cases}
  \end{equation}
  and
  \begin{equation}
    \label{eq:Delta-def}
    \Delta(\omega):=\min\left(|\tan(\omega/4)|,|\tan(\omega/4)|^{-1}\right)
  \end{equation}
\end{lem}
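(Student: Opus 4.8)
The plan is to reduce all three claims to the single variable $t=\tan(\omega/4)$. Writing $a=(-1)^L t^{2L+1}$, we have $\alpha_L(\omega)=2\atan(a)$, and note that $\tan\bigl((-1)^L\atan(t^{2L+1})\bigr)=(-1)^L t^{2L+1}$, so that $\tan^2\theta=t^{2(2L+1)}$ for $\theta=\alpha_L/2$. Periodicity is then immediate: since $\tan$ has period $\pi$, the map $\omega\mapsto t=\tan(\omega/4)$ is $(4\pi)$-periodic, hence so is $a$ and therefore $\alpha_L$.

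For continuity of $\rme^{\rmi\alpha_L}$ I would avoid arguing directly with $\alpha_L$, which jumps by $2(-1)^L\pi$ across each pole $\omega\in2\pi+4\pi\Z$ of $t$, and instead invoke Lemma~\ref{lem:ratioD}, which rearranges to the closed form $\rme^{\rmi\alpha_L(\omega)}=\rme^{\rmi\omega(1/2-L)}\,D_{L}(\rme^{-\rmi\omega})/D_{L}(\rme^{\rmi\omega})$. The numerator and denominator are continuous in $\omega$, and by Lemma~\ref{lem:minD} the denominator never vanishes on the unit circle; hence the right-hand side is continuous on all of $\R$. One checks that the convention $\atan(\pm\infty)=\pm\pi/2$ makes the two expressions agree at the poles (both equal $-1$ there), so the jump of $\alpha_L$ becomes invisible after exponentiation.

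The bound is the substantive part. Using $\mci(\omega)^2=1$ and $|\rme^{\rmi\alpha_L}|=1$, I would first write $|\rme^{\rmi\alpha_L(\omega)}-\mci(\omega)|^2=2\bigl(1-\mci(\omega)\cos\alpha_L(\omega)\bigr)$, and compute $\cos\alpha_L=\cos(2\theta)=(1-\tan^2\theta)/(1+\tan^2\theta)=(1-t^{2(2L+1)})/(1+t^{2(2L+1)})$. I would then split into two cases matched to the definition of $\mci$: by $(4\pi)$-periodicity it suffices to take $\omega$ in the fundamental domain $[-\pi,3\pi)$, and one verifies that $\mci(\omega)=1$ exactly when $|t|\leq1$ (i.e. $\omega\in[-\pi,\pi)$) and $\mci(\omega)=-1$ when $|t|\geq1$ (i.e. $\omega\in[\pi,3\pi)$). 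In the first case $\Delta=|t|$ and $1-\cos\alpha_L=2\Delta^{2(2L+1)}/(1+\Delta^{2(2L+1)})$; in the second $\Delta=|t|^{-1}$ and $1+\cos\alpha_L=2/(1+t^{2(2L+1)})=2\Delta^{2(2L+1)}/(1+\Delta^{2(2L+1)})$. Both cases thus collapse to the single identity $|\rme^{\rmi\alpha_L(\omega)}-\mci(\omega)|^2=4\Delta^{2(2L+1)}(\omega)/(1+\Delta^{2(2L+1)}(\omega))$, whence $|\rme^{\rmi\alpha_L(\omega)}-\mci(\omega)|\leq2\Delta^{2L+1}(\omega)\leq2\sqrt2\,\Delta^{2L+1}(\omega)$, which is even slightly stronger than the claimed bound.

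The remaining issues are bookkeeping. I would confirm the equivalence $\mci=1\Leftrightarrow|\tan(\omega/4)|\leq1$, where the half-open interval $[-\pi,\pi)$ matches the boundary $|t|=1$ correctly at $\omega=\pm\pi$ (and where the two case-formulas coincide, both giving distance $\sqrt2$), and I would check the limiting values at the poles ($t=\infty$, $\Delta=0$), which give $0=0$ on both sides under the stated $\atan$ convention. I expect the continuity argument at the poles to be the one place genuinely needing care, which is precisely why routing it through Lemmas~\ref{lem:ratioD} and~\ref{lem:minD}, rather than through $\alpha_L$ itself, is the cleanest path.
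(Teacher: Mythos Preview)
Your proof is correct and, for the inequality, actually sharper than the paper's. The approaches differ in two places.

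For continuity of $\rme^{\rmi\alpha_L}$, the paper argues directly: $\alpha_L$ is continuous away from $2\pi+4\pi\Z$, and at those poles the jump of $\alpha_L$ has height exactly $2\pi$, so exponentiation erases it. You instead rewrite $\rme^{\rmi\alpha_L}$ as a ratio of nonvanishing continuous functions via Lemmas~\ref{lem:ratioD} and~\ref{lem:minD}. Both are fine; yours avoids having to compute the jump height, at the cost of invoking two auxiliary lemmas.

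For the bound, the paper separates real and imaginary parts, proving $|\cos\alpha_L-\mci|\leq 2\Delta^{2(2L+1)}$ and $|\sin\alpha_L|\leq 2\Delta^{2L+1}$, and then (implicitly) combines them via $|\rme^{\rmi\alpha_L}-\mci|^2=(\cos\alpha_L-\mci)^2+\sin^2\alpha_L\leq 4\Delta^{4(2L+1)}+4\Delta^{2(2L+1)}\leq 8\Delta^{2(2L+1)}$, which is where the constant $2\sqrt2$ comes from. You instead compute $|\rme^{\rmi\alpha_L}-\mci|^2=2(1-\mci\cos\alpha_L)$ in one stroke and obtain the exact identity $4\Delta^{2(2L+1)}/(1+\Delta^{2(2L+1)})$, giving the tighter constant $2$. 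Your route is cleaner and loses nothing; the paper's route has the minor advantage of isolating the separate cosine and sine estimates, which could in principle be reused, but they are not used elsewhere.
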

\begin{proof}
By definition~\eqref{eq:alpha-def}, $\alpha_L$ is $(4\pi)$-periodic and
continuous on $\R\setminus\left(2\pi+4\pi\Z\right)$. Moreover, at any of its discontinuity points in $2\pi+4\pi\Z$,
$\alpha_L$ jumps have height $2\pi$. Hence $\rme^{\rmi\alpha_L}$ is continous
over $\R$.

\noindent We now prove~\eqref{eq:alpha-bound}. We will in fact show the
following more precise bounds, valid for all $\omega\in\R$.
\begin{align}\label{ineq:cos:alpha:1}
|\cos(\alpha_L(\omega))-1|&\leq 2|\tan(\omega/4)|^{2(2L+1)}\;,\\
\label{ineq:cos:alpha:2}
|\cos(\alpha_L(\omega))+1|&\leq 2|\tan(\omega/4)|^{-2(2L+1)}\;,\\
\label{ineq:sin:alpha:1}
  |\sin(\alpha_L(\omega))|&\leq 2\Delta^{2L+1}(\omega)\;.
\end{align}
The bounds~\eqref{ineq:cos:alpha:1} and~\eqref{ineq:cos:alpha:2} easily follow from the identity
\[
\cos(\alpha_L(\omega))=\cos(2 \mathrm{arctan}(\tan(\omega/4)^{2L+1}))=\frac{1-\tan^{2(2L+1)}(\omega/4)}{1+\tan^{2(2L+1)}(\omega/4)}\;.
\]
The bound~\eqref{ineq:sin:alpha:1} follows from the identity
\[
\sin(\alpha_L(\omega))=\sin(2 \mathrm{arctan}(\tan^{2L+1}(\omega/4)))=\frac{2\tan^{2L+1}(\omega/4)}{1+\tan^{2(2L+1)}(\omega/4)}\;.
\]
The proof is concluded.
\end{proof}

Observe that by~\eqref{eq:beta-def} and the definition of $\eta_L$
in~\eqref{eq:eta-def}, $\rme^{\rmi \eta_L}$ can be
expressed directly from $\rme^{\rmi\alpha_L}$, namely as
$$ 
\rme^{\rmi \eta_L(\omega)}=\rme^{-\rmi\alpha_L(\omega/2+\pi)}\,\prod_{j\geq1}\rme^{\rmi\alpha_L\left(2^{-j-1}\omega\right)}\;.
$$
A quite natural question is to
determine the function $1-\rme^{\rmi \eta_L}$ obtained when
$\rme^{\rmi\alpha_L}$ is replaced by its large $L$ approximation $\mci$.  This
is done in the following result.
\begin{lem}
  \label{lem:approx-u-Linfty}
  Define the $(4\pi)$-periodic rectangular function $\mci$
  by~\eqref{eq:Komega}. Then, for all $\omega\in\R\setminus\{0\}$, we have
  \begin{equation}
    \label{eq:I-identity}
1-\mci(\omega/2+\pi)\,\prod_{j\geq1}\mci\left(2^{-j-1}\omega\right)=2\1_{\R_+}(\omega)
\;.    
  \end{equation}
\end{lem}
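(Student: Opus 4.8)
The plan is to reduce~\eqref{eq:I-identity} to a statement about the sign of a finite product, and then to evaluate that sign via a telescoping floor-function identity. Since $\mci$ takes only the values $\pm1$ and $2^{-j-1}\omega\to0$, where $\mci\equiv1$ on a neighbourhood of $0$, only finitely many factors $\mci(2^{-j-1}\omega)$ differ from $1$; hence the product is well defined and the left-hand side of~\eqref{eq:I-identity} equals $1-P(\omega)$, where $P(\omega)=\mci(\omega/2+\pi)\prod_{j\geq1}\mci(2^{-j-1}\omega)\in\{-1,+1\}$. As $2\1_{\R_+}(\omega)$ equals $2$ for $\omega>0$ and $0$ for $\omega<0$, proving~\eqref{eq:I-identity} amounts to showing $P(\omega)=-\sign(\omega)$.

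First I would encode $\mci$ through the floor function. Inspecting~\eqref{eq:Komega}, including the half-open convention on $[-\pi,\pi)$, one verifies the exact identity $\mci(x)=(-1)^{\lfloor x/(2\pi)+1/2\rfloor}$ for every $x\in\R$. Writing $u=\omega/(4\pi)$, elementary substitutions then give $\mci(\omega/2+\pi)=(-1)^{\lfloor u\rfloor+1}$ and $\mci(2^{-j-1}\omega)=(-1)^{\lfloor 2^{-j}u+1/2\rfloor}$, so that
\begin{equation*}
P(\omega)=(-1)^{\lfloor u\rfloor+1+\sum_{j\geq1}\lfloor 2^{-j}u+1/2\rfloor}\;.
\end{equation*}

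The key step is the telescoping. Applying the elementary identity $\lfloor x+1/2\rfloor=\lfloor 2x\rfloor-\lfloor x\rfloor$ with $x=2^{-j}u$ rewrites each summand as $\lfloor 2^{-(j-1)}u\rfloor-\lfloor 2^{-j}u\rfloor$, so the partial sum up to $j=N$ collapses to $\lfloor u\rfloor-\lfloor 2^{-N}u\rfloor$. Letting $N\to\infty$ and using that $2^{-N}u\to0^{+}$ forces $\lfloor 2^{-N}u\rfloor=0$ eventually when $\omega>0$, while $2^{-N}u\to0^{-}$ forces $\lfloor 2^{-N}u\rfloor=-1$ eventually when $\omega<0$, the infinite sum equals $\lfloor u\rfloor$ for $\omega>0$ and $\lfloor u\rfloor+1$ for $\omega<0$. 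Substituting back, the exponent of $-1$ in $P(\omega)$ is $2\lfloor u\rfloor+1$ (odd) when $\omega>0$ and $2\lfloor u\rfloor+2$ (even) when $\omega<0$; hence $P(\omega)=-1$ and $P(\omega)=+1$ respectively, i.e. $P(\omega)=-\sign(\omega)$, which is~\eqref{eq:I-identity}.

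I expect the only real obstacles to be bookkeeping. The delicate points are pinning down the half-open interval $[-\pi,\pi)$ in the floor representation of $\mci$ (so that boundary frequencies land on the correct branch) and tracking the one-sided limit of $\lfloor 2^{-N}u\rfloor$, which is exactly where $\sign(\omega)$ enters and produces the jump between the values $0$ and $2$.
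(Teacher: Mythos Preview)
Your argument is correct. The floor-function representation $\mci(x)=(-1)^{\lfloor x/(2\pi)+1/2\rfloor}$ is exact (including the half-open convention), the Hermite identity $\lfloor x+1/2\rfloor=\lfloor 2x\rfloor-\lfloor x\rfloor$ gives the telescoping you claim, and the one-sided limits of $\lfloor 2^{-N}u\rfloor$ produce precisely the sign dependence needed.

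The paper takes a different but closely related route: it establishes the multiplicative identity $\mci(\omega)=\mci(2\omega+\pi)\,\mci(\omega+\pi)$ by direct case-checking on a fundamental domain, and then telescopes the product $\prod_{j\geq1}\mci(2^{-j}\omega')$ directly, obtaining $\pm\mci(\omega'+\pi)$ with the sign coming from whether the tail factor $\mci(2^{-j}\omega'+\pi)$ stabilises at $+1$ or $-1$. Your Hermite identity is in fact the additive (mod~2) translation of the paper's multiplicative identity under your encoding, so the two telescoping arguments are really the same mechanism viewed through different lenses. Your approach has the practical advantage of being entirely arithmetic and avoiding any case verification on intervals, while the paper's version keeps the argument at the level of the function $\mci$ and makes the role of the shifted sign function $\omega\mapsto\mci(\omega+\pi)$ more visible.
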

\begin{proof} 
  Note that the function $\omega\mapsto \mci(\omega+\pi)$ is the right-continuous,
  $(4\pi)$-periodic function that coincides with the sign of $\omega$ on
  $\omega\in[-2\pi,2\pi)\setminus\{0\}$. 
  It is then easy to verify that, by definition of $\mci$, we have, for all 
  $\omega\in\R$,
  \begin{align}
    \nonumber
    \mci(\omega)&=\mci(2\omega+\pi)\,\mci(\omega+\pi)\\
    \label{eq:telescope-prod-omega-neg}
             &=\frac{\mci(2\omega+\pi)}{\mci(\omega+\pi)}\\
       \label{eq:telescope-prod-omega-pos}
 &=\frac{-\mci(2\omega+\pi)}{-\mci(\omega+\pi)}\;.    
  \end{align}
(By periodicity of $\mci$, it only suffices to check the first equality on
$\omega\in[-2\pi,2\pi)$, the two other equalities follow, since $\mci$ takes values in $\{-1,1\}$.)  
Now, from the previous assertion, we have, for all $\omega<0$, that
$\mci(\omega 2^{-j}+\pi)=1$ for large enough $j$, and thus~\eqref{eq:telescope-prod-omega-neg} implies
$$
\prod_{j\geq1}\mci(2^{-j}\omega)
=\prod_{j\geq1}\frac{\mci(2^{-(j-1)}\omega+\pi)}{\mci(2^{-j}\omega+\pi)}
=\mci(\omega+\pi)\;,
$$
while, for all $\omega>0$, since $-\mci(\omega 2^{-j}+\pi)=1$ for large enough $j$,~\eqref{eq:telescope-prod-omega-pos} implies
$$
\prod_{j\geq1}\mci(2^{-j}\omega)
=\prod_{j\geq1}\frac{-\mci(2^{-(j-1)}\omega+\pi)}{-\mci(2^{-j}\omega+\pi)}
=-\mci(\omega+\pi)\;.
$$
Identity~\eqref{eq:I-identity} follows.
\end{proof}

We can now derive the main result of this section.
\begin{prop}
    \label{prop:uL-final-bound}
    Let $L$ be a positive integer.
  The function $\eta_L$ defined
  by~\eqref{eq:alpha-def},~\eqref{eq:beta-def} and~\eqref{eq:eta-def}
  satisfies the following bound, for all $\omega\in\R$,
  \begin{equation}
    \label{eq:uL-final-bound}
    \left|1-\rme^{\rmi\eta_L(\omega)}-2\1_{\R_+}(\omega)\right|
      \leq2\sqrt{2}\left(\Delta^{2L+1}(\omega/2+\pi)+\sum_{k=1}^{\infty}\Delta^{2L+1}(2^{-k-1}\omega)\right)
\;,
  \end{equation}
  where $\Delta$ is defined by~\eqref{eq:Delta-def}.
\end{prop}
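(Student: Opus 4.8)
The plan is to reduce the left-hand side of~\eqref{eq:uL-final-bound} to a difference of two infinite products of unimodular factors --- one built from the rectangular function $\mci$, the other from $\rme^{\rmi\alpha_L}$ --- and then to control that difference by a telescoping estimate, bounding each factor through Lemma~\ref{lem:alpha}.

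First I would treat the frequency $\omega=0$ separately, since the identity of Lemma~\ref{lem:approx-u-Linfty} is only asserted for $\omega\neq0$. At $\omega=0$ one has $\beta_L(0)=0$ and $\alpha_L(\pi)=(-1)^L\pi/2$, hence $\rme^{\rmi\eta_L(0)}\in\{\rmi,-\rmi\}$, and the left-hand side of~\eqref{eq:uL-final-bound} equals $\sqrt2$ whatever convention is used for $\1_{\R_+}(0)$. Since $\Delta(\pi)=1$ while $\Delta(0)=0$, the right-hand side equals $2\sqrt2$, so the inequality holds at $\omega=0$.

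For $\omega\neq0$ I would combine the product representation recorded just before the statement, $\rme^{\rmi\eta_L(\omega)}=\rme^{-\rmi\alpha_L(\omega/2+\pi)}\prod_{j\geq1}\rme^{\rmi\alpha_L(2^{-j-1}\omega)}$, with the substitution $2\1_{\R_+}(\omega)=1-\mci(\omega/2+\pi)\prod_{j\geq1}\mci(2^{-j-1}\omega)$ supplied by Lemma~\ref{lem:approx-u-Linfty}. After these two replacements the constant terms $1$ cancel, and the quantity inside the modulus in~\eqref{eq:uL-final-bound} becomes exactly $\prod_{k\geq0}a_k-\prod_{k\geq0}b_k$, where $a_0=\mci(\omega/2+\pi)$, $b_0=\rme^{-\rmi\alpha_L(\omega/2+\pi)}$, and $a_k=\mci(2^{-k-1}\omega)$, $b_k=\rme^{\rmi\alpha_L(2^{-k-1}\omega)}$ for $k\geq1$. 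All these factors are unimodular.

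The final step is the inequality $\bigl|\prod_{k\geq0}a_k-\prod_{k\geq0}b_k\bigr|\leq\sum_{k\geq0}|a_k-b_k|$, applied factorwise. I would derive it from the finite telescoping identity $\prod_{k=0}^na_k-\prod_{k=0}^nb_k=\sum_{k=0}^n\bigl(\prod_{j<k}a_j\bigr)(a_k-b_k)\bigl(\prod_{j>k}b_j\bigr)$ (both inner products running over indices up to $n$), in which every partial product has modulus one, and then let $n\to\infty$: the $\mci$-product is eventually constant because $2^{-k-1}\omega\in[-\pi,\pi)$ for large $k$, and the $\rme^{\rmi\alpha_L}$-product converges because $\sum_k\alpha_L(2^{-k-1}\omega)$ converges (its terms being $O(2^{-k(2L+1)})$ near $0$). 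Bounding $|a_k-b_k|$ by $2\sqrt2\,\Delta^{2L+1}(2^{-k-1}\omega)$ for $k\geq1$ via~\eqref{eq:alpha-bound}, and $|a_0-b_0|$ by $2\sqrt2\,\Delta^{2L+1}(\omega/2+\pi)$ using the complex conjugate of~\eqref{eq:alpha-bound} (legitimate since $\mci$ is real-valued, so $\overline{\mci}=\mci$), the resulting sum reproduces exactly the right-hand side of~\eqref{eq:uL-final-bound}. The one point demanding genuine care is this passage to the limit in the telescoping estimate, namely verifying the convergence of both infinite products and that the finite bound survives the limit; everything else is routine bookkeeping.
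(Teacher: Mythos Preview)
Your argument is correct and follows essentially the same route as the paper: both proofs rewrite $1-\rme^{\rmi\eta_L(\omega)}-2\1_{\R_+}(\omega)$ as a difference of two infinite products of unimodular factors via Lemma~\ref{lem:approx-u-Linfty}, telescope that difference, and bound each term by Lemma~\ref{lem:alpha}. The only cosmetic distinction is that the paper first bounds the $\beta_L$-product separately and then appends the $\alpha_L(\omega/2+\pi)$ factor by a triangle inequality, whereas you fold that factor into a single telescoping product from the start; your explicit treatment of $\omega=0$ is also a nice addition.
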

\begin{proof}
  We have, for all $\omega\in\R$ and $J\geq1$,
  $$
  \prod_{j=1}^J  \rme^{\rmi \alpha_L(2^{-j}\omega)}-  \prod_{j=1}^J
  \mci(2^{-j}\omega)
  =\sum_{k=1}^J a_{k,J}(\omega) \;,
$$
where we denote
\[
a_{k,J}(\omega)=\left(\prod_{j=1}^{k-1}\rme^{\rmi
    \alpha_L(2^{-j}\omega)}\right)\cdot\left(\rme^{\rmi
      \alpha_L(2^{-k}\omega)}-\mci(2^{-k}\omega)\right)\cdot\left(\prod_{j=k+1}^{J}
  \mci(2^{-j}\omega)\right)\;,
\]
with the convention $\prod_1^0(\dots)=\prod_{J+1}^J(\dots)=1$. Since $\alpha_L$
is real valued and $\mci$ is valued in $\{-1,1\}$, it follows that, for all $\omega\in\R$ and $J\geq1$,
$$
\left|  \prod_{j=1}^J  \rme^{\rmi \alpha_L(2^{-j}\omega)}-  \prod_{j=1}^J
  \mci(2^{-j}\omega)
\right|\leq \sum_{k=1}^J |a_{k,J}(\omega)|\leq \sum_{k=1}^J
\left| \rme^{\rmi
      \alpha_L(2^{-k}\omega)}-\mci(2^{-k}\omega)
\right|\;.
$$
Applying Lemma~\ref{lem:alpha} yields for all $\omega\in\R$ and $J\geq1$,
$$
\left|  \prod_{j=1}^J  \rme^{\rmi \alpha_L(2^{-j}\omega)}- \prod_{j=1}^J
  \mci(2^{-j}\omega)
\right|\leq 2\sqrt{2} \left(\sum_{k=1}^{J}\Delta^{2L+1}(2^{-k}\omega)\right)\;.
$$
Letting $J\to\infty$ and applying the definition of $\beta_L$, we deduce that, for all $\omega\in\R$,
\begin{equation}
  \label{eq:beta-bound-Delta}
\left|\rme^{\rmi\beta_L(\omega)}-\prod_{j=1}^\infty
    \mci(2^{-j}\omega)\right|\leq 2\sqrt{2} \sum_{k=1}^{\infty}\Delta^{2L+1}(2^{-k}\omega)\;.  
\end{equation}
By definition of $\eta_L$, since $\alpha_L$ and $\beta_L$ are real valued and
$\mci$ is valued in $\{-1,1\}$, we have, for all $\omega\in\R$
\begin{align*}
  \left|\rme^{\rmi\eta_L(2\omega)}-\mci(\omega+\pi)\prod_{j=1}^\infty
  \mci(2^{-j}\omega)\right|
  &\leq
    \left|\rme^{\rmi\alpha_L(\omega+\pi)}-\mci(\omega+\pi)\right|
    +\left|\rme^{\rmi\beta_L(\omega)}-\prod_{j=1}^\infty
    \mci(2^{-j}\omega)\right|\;.
\end{align*}
Hence, with Lemma~\ref{lem:alpha} and~\eqref{eq:beta-bound-Delta}, we conclude,
for all $\omega\in\R$,
$$
  \left|\rme^{\rmi\eta_L(2\omega)}-\mci(\omega+\pi)\prod_{j=1}^\infty
    \mci(2^{-j}\omega)\right|
  \leq2\sqrt{2}\left(\Delta^{2L+1}(\omega+\pi)+\sum_{k=1}^{\infty}\Delta^{2L+1}(2^{-k}\omega)\right)
\;.
$$
The bound~\eqref{eq:uL-final-bound} then follows from
Lemma~\ref{lem:approx-u-Linfty}.
\end{proof}

\subsubsection*{Study of the upper bound}

The objective is to simplify the right-hand side of~\eqref{eq:uL-final-bound} to obtain the form given in~\eqref{eq:final-analycity-result}. The following lemma essentially gives some interesting properties of the
function $\Delta$. 
\begin{lem}\label{lem:delta} 
  Let $\Delta$ be the function defined by~\eqref{eq:Delta-def}. Then $\Delta$ is an even
  $(2\pi)$-periodic function, increasing and bijective from $[0,\pi]$ to
  $[0,1]$. It follows that, for all $\omega\in\R$,
  \begin{equation}
    \label{eq:Delta1bound}
  \Delta(\omega)=\tan\left(\frac{\pi}{4}(1-\frac{1}{\pi}\delta(\omega,\pi+2\pi\Z))\right)\leq
  1-\frac1\pi \delta(\omega,\pi+2\pi\Z)\;,    
  \end{equation}
where $\delta$ is the function defined in equation~\eqref{eqn:delta}.
\end{lem}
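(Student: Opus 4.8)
The plan is to reduce everything to the behaviour of $\Delta$ on $[0,\pi]$, where the structure is simplest, and then to use symmetry to extend to all of $\R$. First I would record the elementary symmetries. Since $\tan$ is odd, $|\tan(\omega/4)|$ is even, hence so are $|\tan(\omega/4)|^{-1}$ and their minimum, so $\Delta$ is even. For the periodicity I would use $\tan((\omega+2\pi)/4)=\tan(\omega/4+\pi/2)=-1/\tan(\omega/4)$, so that adding $2\pi$ to $\omega$ exchanges $|\tan(\omega/4)|$ and $|\tan(\omega/4)|^{-1}$ and therefore leaves their minimum unchanged; this gives the $(2\pi)$-periodicity. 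Next, for $\omega\in[0,\pi]$ one has $\omega/4\in[0,\pi/4]$, so $\tan(\omega/4)\in[0,1]$ and the minimum in~\eqref{eq:Delta-def} is attained by $|\tan(\omega/4)|=\tan(\omega/4)$. Hence $\Delta(\omega)=\tan(\omega/4)$ on $[0,\pi]$, which is clearly increasing and bijective from $[0,\pi]$ onto $[0,1]$.

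To obtain the closed form I would compute the distance $\delta(\omega,\pi+2\pi\Z)$. The set $\pi+2\pi\Z$ consists of the odd multiples of $\pi$; it is symmetric about the origin and $(2\pi)$-periodic, so $\omega\mapsto\delta(\omega,\pi+2\pi\Z)$ is itself even and $(2\pi)$-periodic. On $[0,\pi]$ the nearest odd multiple of $\pi$ is $\pi$ itself, whence $\delta(\omega,\pi+2\pi\Z)=\pi-\omega$. Substituting this into the right-hand side of~\eqref{eq:Delta1bound} gives $\tan\bigl(\tfrac{\pi}{4}(1-\tfrac{1}{\pi}(\pi-\omega))\bigr)=\tan(\omega/4)$, which matches $\Delta(\omega)$ on $[0,\pi]$. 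Since both $\Delta$ and the right-hand side are even and $(2\pi)$-periodic and agree on $[0,\pi]$, they agree on all of $\R$: any $\omega$ reduces modulo $2\pi$ into $[-\pi,\pi]$ and then, by evenness, to $[0,\pi]$.

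It remains to prove the inequality. Writing $t=1-\tfrac{1}{\pi}\delta(\omega,\pi+2\pi\Z)$, I note that $\delta(\omega,\pi+2\pi\Z)$ ranges over $[0,\pi]$ (its maximal value $\pi$ being attained at the even multiples of $\pi$), so that $t\in[0,1]$ and the argument $\tfrac{\pi}{4}t$ lies in $[0,\pi/4]$. By the closed form just established, the claim $\Delta(\omega)\le t$ is equivalent to $\tan(\tfrac{\pi}{4}t)\le t$ for $t\in[0,1]$. This follows from the convexity of $\tan$ on $[0,\pi/4]$, where $\tan''=2\sec^2\tan\ge0$: the graph lies below the chord joining $(0,0)$ and $(\pi/4,1)$, that is $\tan(x)\le\tfrac{4}{\pi}x$ for $x\in[0,\pi/4]$, and taking $x=\tfrac{\pi}{4}t$ yields exactly $\tan(\tfrac{\pi}{4}t)\le t$. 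None of the steps presents a serious obstacle; the only point requiring care is the bookkeeping in the distance computation together with the verification that the right-hand side of~\eqref{eq:Delta1bound} is even and $(2\pi)$-periodic, which is precisely what legitimises extending the identity from $[0,\pi]$ to all of $\R$.
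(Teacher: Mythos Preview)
Your argument is correct in all details: the evenness, the $(2\pi)$-periodicity via $\tan(\omega/4+\pi/2)=-\cot(\omega/4)$, the reduction to $\Delta(\omega)=\tan(\omega/4)$ on $[0,\pi]$, the identification $\delta(\omega,\pi+2\pi\Z)=\pi-\omega$ there, and the convexity bound $\tan(\tfrac{\pi}{4}t)\le t$ on $[0,1]$ are all valid. The paper itself omits the proof entirely as ``straightforward'', so there is no alternative approach to compare against; you have simply supplied the routine verification that the authors left to the reader.
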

\begin{proof}
  The proof is straightforward and thus omitted. 
\end{proof}
  Note that the upper bound in~\eqref{eq:Delta1bound} decreases from 1 to 0 as
$\delta(\omega,\pi+2\pi\Z)$ increases from 0 to $\pi$.
Since $\Delta$ takes its values in $[0,1)$ on
$\R\setminus\left(\pi+2\pi\Z\right)$, Lemma~\ref{lem:alpha} shows that, out of the set
$\pi+2\pi\Z$, $\rme^{\rmi\alpha_L}$ uniformly converges to the
$(4\pi)$-periodic rectangular function $\mci$ as $L\to\infty$.

We will use the following bound.
\begin{lem}\label{ineq:key}
For all $\omega\in (-\pi/4,\pi/4)$ and $L\geq0$, we have
\[
\sum_{j=0}^\infty |\tan|^{2L+1}(2^{-j}\omega)\leq 2\left|\tan\right|^{2L+1}(\omega)\;.
\]
\end{lem}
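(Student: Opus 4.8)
The plan is to reduce the claim to a single per-term comparison and then sum a geometric series. Since $|\tan|$ is even, I may assume $\omega\in[0,\pi/4)$ and set $a_j=\tan^{2L+1}(2^{-j}\omega)$; all terms are nonnegative because each argument $2^{-j}\omega$ lies in $[0,\pi/4)\subset[0,\pi/2)$, where $\tan\geq0$. The target inequality then reads $\sum_{j\geq0}a_j\leq 2a_0$, so it suffices to control the ratio between consecutive terms.

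The key step is the half-angle bound
\[
\tan(x/2)\leq\tfrac12\tan(x),\qquad x\in[0,\pi/2)\;.
\]
First I would derive this from the double-angle identity $\tan(x)=\dfrac{2\tan(x/2)}{1-\tan^2(x/2)}$, which rearranges to $\tan(x/2)=\tfrac12\tan(x)\,(1-\tan^2(x/2))$. For $x/2\in[0,\pi/4)$ one has $\tan(x/2)\in[0,1)$, hence the correcting factor satisfies $1-\tan^2(x/2)\in(0,1]$, and the displayed inequality follows at once.

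Applying this with $x=2^{-j}\omega$ (so that $x/2=2^{-(j+1)}\omega$ and $x\leq\omega<\pi/4<\pi/2$) and raising both nonnegative sides to the power $2L+1$ yields
\[
a_{j+1}=\tan^{2L+1}(2^{-(j+1)}\omega)\leq\left(\tfrac12\right)^{2L+1}a_j\leq\tfrac12\,a_j\;,
\]
so the sequence $(a_j)_{j\geq0}$ decays at least geometrically with ratio $1/2$. Iterating gives $a_j\leq 2^{-j}a_0$, whence
\[
\sum_{j=0}^\infty a_j\leq a_0\sum_{j=0}^\infty 2^{-j}=2a_0=2\tan^{2L+1}(\omega)\;,
\]
which is the desired bound.

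As for difficulty, there is essentially no real obstacle here: the only ingredient with any content is the half-angle inequality, and even that is immediate once the tangent double-angle identity is invoked together with the observation that the correcting factor $1-\tan^2(x/2)$ is at most $1$ on the relevant range. The one mild point to keep in mind is that every iterate $2^{-j}\omega$ remains in $(-\pi/4,\pi/4)$, so that, after the reduction to $\omega\geq0$, $\tan$ stays nonnegative and the half-angle bound applies verbatim at each scale.
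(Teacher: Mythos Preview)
Your proof is correct. Both your argument and the paper's reduce to showing the per-term bound $\tan^{2L+1}(2^{-j}\omega)\leq 2^{-j}\tan^{2L+1}(\omega)$ and then summing the geometric series, but the justifications differ. The paper argues via convexity: since $\tan$ is convex on $[0,\pi/4)$, the slope $x\mapsto x^{-1}\tan(x)$ is increasing there, and so is $x\mapsto x^{-1}\tan^{2L+1}(x)$; evaluating at $x=2^{-j}\omega$ versus $x=\omega$ gives the bound in one stroke. You instead use the double-angle identity to obtain the recursive inequality $\tan(x/2)\leq\tfrac12\tan(x)$ on $[0,\pi/2)$, then iterate. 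Your route is slightly more algebraic and in fact yields the sharper ratio $a_{j+1}\leq 2^{-(2L+1)}a_j$ (which you then relax to $2^{-1}$); the paper's convexity route is a touch more conceptual and gets the needed bound without iteration. Both are entirely elementary and of comparable length.
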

\begin{proof}
  It suffices to prove the inequality for $\omega\in(0,\pi/4)$.
  By convexity of $\tan$, the slope $x\mapsto x^{-1}\tan(x)$ is
  increasing on $[0,\pi/4)$ and so is $x\mapsto x^{-1}\tan^{2L+1}(x)$ for
  $L\geq0$. Hence we have, for all $\omega\in(0,\pi/4)$,
  \begin{align*}
    \sum_{j=0}^\infty \tan^{2L+1}(2^{-j}\omega)
    &=\sum_{j=0}^\infty
      2^{-j}\omega\left(2^{-j}\omega\right)^{-1}\tan^{2L+1}(2^{-j}\omega)    \\
    &\leq\sum_{j=0}^\infty
      2^{-j}\omega\left(\omega\right)^{-1}\tan^{2L+1}(\omega)\\
    &=2\,\tan^{2L+1}(\omega)\;.
  \end{align*}
    The proof is concluded.
\end{proof}

We also have the following lemma.
\begin{lem}
  For all $\omega\in\R$, we have 
  \begin{align}
    \label{eq:1}
    \delta(\omega/2+\pi,\pi+2\pi\Z)&\geq 2^{-1}\delta(\omega,4\pi\Z)\;,\\
    \delta(2^{-j}\omega,\pi+2\pi\Z)&\geq 2^{-j}\,\delta(\omega,4\pi\Z)\qquad\text{for all
                                integer $j\geq2$}\;.
    \label{eq:2}
  \end{align}
\end{lem}
\begin{proof}
  The bound~\eqref{eq:1} is obvious. To show~\eqref{eq:2}, take
  $x\in\pi+2\pi\Z$. Then, for all $\omega\in\R$ and $j\geq2$, we have
  $\left|2^{-j}\omega-x\right|=2^{-j}\left|\omega-2^jx\right|$ and, since $2^j
  x\in4\pi\Z$, we get~\eqref{eq:2}.  
\end{proof}

We are now able to give a more concise upper bound.
\begin{lem}
  \label{lem:sumDeltaBound}
  Let $\Delta$ be defined by~\eqref{eq:Delta-def}. Then, for all
  $\omega\in\R$, we have
  \begin{equation}
  \label{eq:sumDeltaBound}
\sum_{j=1}^{\infty}\Delta^{2L+1}(2^{-j-1}\omega)
\leq\left(\log_2\left(\frac{\max(4\pi,|\omega|)}{2\pi}\right)+1\right)\,\left(1-\frac{\delta(\omega,4\pi\Z)}{\max(4\pi,|\omega|)}\right)^{2L+1}\;.
\end{equation}
\end{lem}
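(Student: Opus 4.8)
The plan is to reduce to $\omega\ge0$, reindex the series, split it at the dyadic scale where the argument first falls below $\pi$, bound the finitely many ``high'' scales individually by $\rho:=1-\delta(\omega,4\pi\Z)/\max(4\pi,|\omega|)$, and collapse the infinite ``low'' tail with Lemma~\ref{ineq:key}. Since $\Delta$ is even by Lemma~\ref{lem:delta}, I may assume $\omega\ge0$; write $M=\max(4\pi,\omega)$ and reindex by $k=j+1$, so the left-hand side of~\eqref{eq:sumDeltaBound} is $\sum_{k\ge2}\Delta^{2L+1}(2^{-k}\omega)$. Let $k_0$ be the smallest integer $k\ge2$ with $2^{-k}\omega\le\pi$. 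For $k<k_0$ (the ``high'' scales) one has $2^{-k}\omega\ge\pi$, hence $2^{k}\le\omega/\pi\le M/\pi$; for $k\ge k_0$ (the ``low'' scales) the arguments decrease to $0$ and stay in $[0,\pi]$. When $\omega\le4\pi$ there are no high scales, $k_0=2$, the prefactor equals $\log_2(4\pi/2\pi)+1=2$, and the whole sum is treated as a tail.

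For a high scale $k$, I would combine the pointwise estimate $\Delta(\cdot)\le1-\tfrac1\pi\,\delta(\cdot,\pi+2\pi\Z)$ of Lemma~\ref{lem:delta} with the scaling bound~\eqref{eq:2}, namely $\delta(2^{-k}\omega,\pi+2\pi\Z)\ge2^{-k}\delta(\omega,4\pi\Z)$, to obtain $\Delta(2^{-k}\omega)\le1-\tfrac1\pi 2^{-k}\delta(\omega,4\pi\Z)$. Because $2^{-k}\ge\pi/M$ for these $k$, the right-hand side is at most $1-\delta(\omega,4\pi\Z)/M=\rho$, so each high-scale term is bounded by $\rho^{2L+1}$. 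The number of such terms is at most $\log_2(M/\pi)-1=\log_2(M/2\pi)$, which is exactly the first factor in the claimed prefactor.

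It remains to handle the low tail $\sum_{k\ge k_0}\Delta^{2L+1}(2^{-k}\omega)$. Here $\Delta(2^{-k}\omega)=|\tan(2^{-k}\omega/4)|$ with $2^{-k}\omega/4\in[0,\pi/4]$, so Lemma~\ref{ineq:key}, applied with $\theta=2^{-k_0}\omega/4$, collapses the dyadic tail to a constant multiple of its leading term $\Delta^{2L+1}(2^{-k_0}\omega)$. The remaining task is to show that this leading term, together with the geometric factor supplied by Lemma~\ref{ineq:key}, contributes no more than the one extra $\rho^{2L+1}$ recorded by the ``$+1$'' in the prefactor; summing the high-scale bound with this tail bound then gives~\eqref{eq:sumDeltaBound}.

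I expect the main obstacle to be precisely this first low-scale (critical) octave. Since $2^{-k_0}\omega\in(\pi/2,\pi]$, the value $\Delta(2^{-k_0}\omega)=\tan\!\big(\tfrac\pi4(1-\tfrac1\pi\delta(2^{-k_0}\omega,\pi+2\pi\Z))\big)$ is governed by the \emph{local} distance $\delta(2^{-k_0}\omega,\pi+2\pi\Z)$, and the delicate point is to compare this sharply with the \emph{global} quantity $\delta(\omega,4\pi\Z)/M$ controlling $\rho$; the crude scaling inequality~\eqref{eq:2} is too lossy at exactly this scale, because the nearest point of $4\pi\Z$ to $\omega$ need not be $2^{k_0}\pi$. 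The plan is therefore to use the exact identity of Lemma~\ref{lem:delta} at the critical scale, exploiting $2^{-k_0}\omega\in(\pi/2,\pi]$ together with the position of $\omega$ within its surrounding $4\pi$-block, and then to absorb the geometric constant from Lemma~\ref{ineq:key} into the prefactor so that the low tail is subsumed by a single additional $\rho^{2L+1}$.
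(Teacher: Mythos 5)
Your decomposition is the same as the paper's: reindex dyadically, split the series at the first scale $k_0$ where $2^{-k}\omega$ falls below $\pi$, bound each ``high'' scale by $\rho^{2L+1}$ with $\rho=1-\delta(\omega,4\pi\Z)/\max(4\pi,|\omega|)$ via Lemma~\ref{lem:delta} and~\eqref{eq:2}, and collapse the tail with Lemma~\ref{ineq:key}. The high-scale part of your argument is correct and does yield the factor $\log_2\bigl(\max(4\pi,|\omega|)/(2\pi)\bigr)$. The problem is the step you defer to the end: showing that the collapsed tail $2\Delta^{2L+1}(2^{-k_0}\omega)$ fits inside the single extra $\rho^{2L+1}$ recorded by the ``$+1$''. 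This is not a technicality that a sharper use of Lemma~\ref{lem:delta} at the critical octave can repair, because the target inequality is false. Take $\omega=7.9\pi$ and $L=1$: then $k_0=3$, $2^{-k_0}\omega=0.9875\pi$, $\Delta(2^{-k_0}\omega)=\tan(0.246875\pi)\approx0.980$, so your tail bound is $2(0.980)^3\approx1.88$, whereas $\rho=1-\tfrac{0.1\pi}{7.9\pi}\approx0.9873$ gives $\rho^{3}\approx0.963$. Worse, the exact tail $\sum_{k\geq 3}\Delta^{3}(2^{-k}\omega)\approx1.02$ already exceeds $\rho^{3}$, so no improvement of the estimate on that tail alone can save the accounting. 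The phenomenon is structural: when $\omega$ sits just below $2^{k_0}\pi$, the critical term $\Delta(2^{-k_0}\omega)$ is close to $1$, while $\rho$ is governed by the distance of $\omega$ to the much coarser lattice $4\pi\Z$; allotting only one unit of $\rho^{2L+1}$ to the whole tail is then simply too little.

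A correct argument must let the critical scale borrow from the essentially unused budget of the high scales (in the example above these contribute about $7.6\cdot10^{-6}$ against an allowance of about $1.9$). That is how the paper proceeds: it bounds all $\iota(\omega)+1$ units --- the $\iota(\omega)-1$ head terms plus the two units produced by Lemma~\ref{ineq:key} --- by the single quantity $\bigl(\sup_{1\leq j\leq\iota(\omega)}\Delta(2^{-j-1}\omega)\bigr)^{2L+1}$, where the supremum \emph{includes} the critical scale, and only then converts that supremum into $\rho$ using~\eqref{eq:2} with the smallest scale factor $2^{-\iota(\omega)-1}$. You have correctly diagnosed that the critical octave is the delicate point (the paper's own treatment of it is quite terse), but as written your proposal has a genuine gap there, and the specific route you sketch for closing it --- subsuming the tail into a single additional $\rho^{2L+1}$ --- cannot work.
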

\begin{proof}
  Denote
$$
\iota(\omega)=\min\left\{j\geq1~:~|\omega|\,2^{-j-1}<\pi\right\}\leq
\log_2\left(\frac{\max\left(4\pi,|\omega|\right)}{2\pi}\right)\;.
$$
$$
\sum_{j=1}^{\infty}\Delta^{2L+1}(2^{-j-1}\omega)
\leq\sum_{j=1}^{\iota(\omega)-1} \Delta^{2L+1}(2^{-j-1}\omega)+ \sum_{j\geq \iota(\omega)}|\tan|^{2L+1}(2^{-j-3}\omega)
$$
Lemma~\ref{ineq:key} gives that, for all
$\omega\in\R$,
$$
\sum_{j\geq \iota(\omega)}|\tan|^{2L+1}(2^{-j-3}\omega)\leq 2|\tan|^{2L+1}(2^{-\iota(\omega)-3}\omega)=2\Delta^{2L+1}(2^{-\iota(\omega)-1}\omega)\;.
$$
The last two bounds yield, for all $\omega\in\R$,
$$
\sum_{j=1}^{\infty}\Delta^{2L+1}(2^{-j-1}\omega)
\leq \left(\iota(\omega)+1\right)\left(\sup_{1\leq j\leq \iota(\omega)}\Delta(2^{-j-1}\omega)\right)^{2L+1}\;.
$$
Note that Lemma~\ref{lem:delta} and~\eqref{eq:2} imply
$$
\sup_{1\leq j\leq \iota(\omega)}\Delta(2^{-j-1}\omega)\leq
1-\frac1\pi2^{-\iota(\omega)-1}\,\delta(\omega,4\pi\Z)\;.
$$
The above bound on $\iota(\omega)$ then
gives~\eqref{eq:sumDeltaBound}.
\end{proof}
We can now conclude with the proof of the main result. 
\begin{proof}[Proof of Theorem~\ref{th:main-result-analyticity}]
By Lemma~\ref{lem:delta} and~\eqref{eq:1}, we have, for all $\omega\in\R$,
$$
\Delta(\omega/2+\pi)\leq1-\frac1{2\pi} \delta(\omega,4\pi\Z)\leq1-\frac{\delta(\omega,4\pi\Z)}{\max(4\pi,|\omega|)}\;.
$$
Using this bound,~\eqref{eq:uL}, Proposition~\ref{prop:uL-final-bound} and
Lemma~\ref{lem:sumDeltaBound}, we get~\eqref{eq:final-analycity-result}.
\end{proof}

\section{Proofs of Section~\ref{prop:R}}
\label{sec:proofs-perfect-reconstruction}
The following lemma will be useful.
\begin{lem}
  \label{lem:zerosPS}
  Let $L$ be a positive integer.  The complex roots of the polynomial
  $\tilde s(x)=\sum_{n=0}^L{{2L+1}\choose{2n}}x^n$ belong to $\R_-$.
\end{lem}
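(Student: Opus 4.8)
The claim is that the polynomial $\tilde s(x)=\sum_{n=0}^L \binom{2L+1}{2n}x^n$ has all its complex roots in $\R_-$. The plan is to connect $\tilde s$ to the closed-form expression for $D_L$ obtained in Proposition~\ref{prop:D}, which already encodes the even binomial coefficients $\binom{2L+1}{2n}$. Recall from the computation in the proof of Proposition~\ref{prop:D} that $(1+w)^{2L+1}+(1-w)^{2L+1}=2\sum_{n=0}^L\binom{2L+1}{2n}w^{2n}$, since all odd powers cancel. Hence, setting $x=w^2$, we obtain the factorization
\begin{equation*}
  2\,\tilde s(w^2)=(1+w)^{2L+1}+(1-w)^{2L+1}\;.
\end{equation*}
The strategy is therefore to find all roots of the right-hand side as a polynomial in $w$, and then to read off the roots in $x$ via $x=w^2$.

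First I would solve $(1+w)^{2L+1}=-(1-w)^{2L+1}$, i.e.\ $\left(\frac{1+w}{1-w}\right)^{2L+1}=-1$ (noting $w=1$ is not a root, as it gives $2^{2L+1}\neq0$). The solutions satisfy $\frac{1+w}{1-w}=\rme^{\rmi\pi(2k+1)/(2L+1)}$ for $k=0,\dots,2L$. Solving the Möbius relation $\frac{1+w}{1-w}=\rme^{\rmi\theta}$ gives $w=\rmi\tan(\theta/2)$, so the $2L+1$ roots in $w$ are
\begin{equation*}
  w_k=\rmi\tan\!\left(\frac{\pi(2k+1)}{2(2L+1)}\right),\qquad k=0,\dots,2L\;.
\end{equation*}
The value $k=L$ gives $\theta=\pi$, corresponding to the degree drop from $2L+1$ to $2L$ in $w$ (the leading coefficient of $(1+w)^{2L+1}+(1-w)^{2L+1}$ vanishes), so I would discard it and keep the $2L$ genuine roots. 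Then $x_k=w_k^2=-\tan^2\!\left(\frac{\pi(2k+1)}{2(2L+1)}\right)$, and since $\tan^2\geq0$, each such $x_k$ lies in $\R_-$. Because $\tilde s$ has degree $L$ and I exhibit $L$ roots (pairing $\pm w_k$, which collapse under squaring to the $L$ values $k\in\{0,\dots,L-1\}$, matching precisely the $y_{k,L}$ of~\eqref{eq:s-roots}), these account for all roots of $\tilde s$.

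The main bookkeeping obstacle is making the root-counting airtight: I must verify that the $2L+1$ values $w_k$ are distinct, that $k=L$ is the unique spurious root to drop, and that the squaring map sends the remaining $2L$ roots onto exactly $L$ distinct negative reals (so no multiplicity is lost and $\tilde s$ is fully factored). The symmetry $w\mapsto -w$ of the right-hand side, together with $\tan\!\left(\frac{\pi(2(2L-k)+1)}{2(2L+1)}\right)=-\tan\!\left(\frac{\pi(2k+1)}{2(2L+1)}\right)$, pairs $w_k$ with $w_{2L-k}$ and shows their squares coincide, giving the clean two-to-one reduction. This simultaneously confirms that the negative reals $y_{k,L}$ in~\eqref{eq:s-roots} are precisely the roots of $\tilde s$, which is exactly the fact needed for the interpolation formula~\eqref{eq:interp-rLM} in Proposition~\ref{pro:expr-rL0}.
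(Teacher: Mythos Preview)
Your proof is correct, and it rests on the same key identity as the paper, namely $2\,\tilde s(w^2)=(1+w)^{2L+1}+(1-w)^{2L+1}$. However, the paper's argument from this point is considerably shorter: it simply observes that if $\tilde s(w^2)=0$ then $|1+w|^{2L+1}=|1-w|^{2L+1}$, hence $|1+w|=|1-w|$, which forces $\mathrm{Re}(w)=0$ and therefore $w^2\in\R_-$. No explicit root computation or bookkeeping is needed.

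Your route, by contrast, solves the equation completely via the M\"obius map and exhibits the roots explicitly as $-\tan^2\!\bigl(\tfrac{\pi(2k+1)}{2(2L+1)}\bigr)$. This is more work than the lemma requires, but it is not wasted: what you have written is essentially the proof of the later Lemma~\ref{lem:roots}, and your final paragraph already identifies the roots with the $y_{k,L}$ of~\eqref{eq:s-roots}. So you have proved Lemma~\ref{lem:zerosPS} and Lemma~\ref{lem:roots} in one go, at the cost of the distinctness and pairing checks that the paper's modulus argument sidesteps entirely for the weaker statement.
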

\begin{proof}
  Observe that for all $z\in\C$, $\tilde s(z^2)=\frac12((1+z)^{2L+1}+(1-z)^{2L+1})$.
  Thus if $\tilde s(z^2)=0$ with $z=x+\rmi y$ and $(x,y)\in\R^2$, we necessarily have that
$|1+z|^2=(1+x)^2+y^2$ is equal to $|1-z|^2=(1-x)^2+y^2$, and thus $x=0$ and $z^2\in\R_-$.
\end{proof}
\begin{proof}[Proof of Proposition~\ref{prop:R}]
By Proposition~\ref{prop:D}, we have, for all  $\theta\in\R$,
\begin{align}\nonumber
D_{L}(\rme^{2\rmi\theta})D_{L}(\rme^{-2\rmi\theta})&=\frac{1}{4(2L+1)^2}\left|(1+\rme^{\rmi\theta})^{2L+1}
+(1-\rme^{\rmi\theta})^{2L+1}\right|^2\nonumber\\
\nonumber
&=\frac{2^{2(2L+1)}}{4(2L+1)^2} \left[\cos^{2(2L+1)}(\theta/2)
+\sin^{2(2L+1)}(\theta/2)\right]\\
\nonumber
&=\frac{2^{2L}}{(2L+1)^2}\left[\frac{(1+\cos(\theta))^{2L+1}
  +(1-\cos(\theta))^{2L+1}}{2}\right]\\
\label{eq:DDeq}  
&=\frac{2^{2L}}{(2L+1)^2}\sum_{n=0}^L{{2L+1}\choose{2n}}\cos^{2n}(\theta)\;.
\end{align}
Note that if
$z=\rme^{2\rmi\theta}$ with $\theta\in\R$, then
$2+z+1/z=2(1+\cos(2\theta))=(2\cos(\theta))^2$.
By definition of $S$, we obtain that, for all $\theta\in\R$, 
\begin{align*}
S(\rme^{2\rmi\theta})&=2^{2M} \cos^{2M}(\theta)\times \frac{2^{2L}}{(2L+1)^{2}}\sum_{n=0}^L{{2L+1}\choose{2n}}\cos^{2n}(\theta) \\
&=\frac{2^{2M+2L}}{(2L+1)^2}\,s(\cos^2(\theta))\;,
\end{align*}
where $s$ is
the polynomial defined by~\eqref{eq:PS-def}.
Looking for a solution $R$ of~\eqref{eqn:PR} in the form
$R(z)=U\left(\frac{2+z+1/z}{4}\right)$ with $U$ real polynomial and focusing on
$z=\rme^{2\rmi\theta}$ with $\theta\in\R$, we obtain the equation
\begin{equation}
\label{eqn:eqnR}
U(1-y) s(1-y)+U(y) s(y)=C(L,M)^2.
\end{equation}
where we have denoted $y=\sin^2(\theta)=1-(2+z+1/z)/4\in[0,1]$ and $C(L,M)={(2L+1)2^{-M-L+1/2}}$.  Reciprocally, any such polynomial $U$ provides a solution
$R(z)=U\left(\frac{2+z+1/z}{4}\right)$ of~\eqref{eqn:PR} for
$z=\rme^{2\rmi\theta}$ with $\theta\in\R$ and then for all $z\in\C^*$ by
analytic extension.

Since the
complex roots of $s$ are valued in the set $\R_-$ of non-positive real numbers
(see Lemma~\ref{lem:zerosPS}), we get that $s(1-y)$ and $s(y)$ are prime
polynomials of degree $L+M$. Thus the Bezout Theorem allows us to describe the couples of real
polynomials $(U,V)$ solutions of the equation 
$$
V(y)s(1-y)+U(y)s(y)=C(L,M)^2\;.
$$
Note that $U$ is a solution of~\eqref{eqn:eqnR} if and only
if $(U,V)$ is a solution of the Bezout equation with $V(y)=U(1-y)$. Now, by
uniqueness of the solution of the Bezout equation such that both $U$ and $V$
have degrees at most $L+M-1$, we see that this solution must satisfy
$V(y)=U(1-y)$ (since otherwise $(V(1-y),U(1-y))$ would provide a different
solution). Hence we obtain a unique solution $U=r$ of~\eqref{eqn:eqnR} of
degree at most $L+M-1$, which proves Assertion~\ref{item:1}.

Other solutions $(U,V)$ of the Bezout equation are obtained by taking
$U(y)=r(y)+s(1-y)\,q(y)$ with $q$ any polynomial. Looking for such a solution
of~\eqref{eqn:eqnR}, we easily get that it is one if and only if $q$ satisfies
$q(1-y)=-q(y)$. The proof of Assertion~\ref{item:2} is concluded.
\end{proof}
\begin{proof}[Proof of Theorem~\ref{thm:Q}]
  By Proposition~\ref{prop:R}, the given $R$ is a solution of~\eqref{eqn:PR}
  provided that $q$ satisfies $q(1-y)=-q(y)$, which we assume in the
  following. As explained above, the
  factorization holds if and only if $R$ is non-negative on the unit circle,
  or, equivalently, if $r(y)+s(1-y)\,q(y)$ is non-negative for $y$ in $(0,1)$. By antisymmetry of
  $q$ around $1/2$, this is equivalent to have, for all $y\in[1/2,1)$,
$$
r(y)+s(1-y)\,q(y)\quad\text{and}\quad r(1-y)-s(y)\,q(y)\geq0\;.
$$
Since $s(y)>0$ for all  $y\in(0,1)$ and using that~\eqref{eqn:eqnR} holds
with $U=r$, we finally obtain that the claimed
factorization holds if and only if, for all $y\in[1/2,1)$,
\begin{equation}
  \label{eq:F-2ndCond}
-\frac{r(y)}{s(1-y)}\leq q(y)\leq
-\frac{r(y)}{s(1-y)}+\frac{C(L,M)^2}{s(y)s(1-y)} \;.  
\end{equation}
We first note that for all $y\in[1/2,1)$,
$1/(s(y)s(1-y))\geq1/(s(1)s(1/2))>0$. Hence the upper bound condition
in~\eqref{eq:F-2ndCond} is away from the lower bound by at least a positive
constant over $y\in[0,1/2)$.  Second, using again~\eqref{eqn:eqnR} with $U=r$
at $y=1/2$ we have
$$
2r(1/2)s(1/2)=C(L,M)^2\;.
$$
It follows that, for $y=1/2$,~\eqref{eq:F-2ndCond} reads 
$$
-\frac{C(L,M)^2}{2s^2(1/2)}\leq q(1/2)\leq\frac{C(L,M)^2}{2s^2(1/2)} \;,  
$$
This is compatible with $q(1/2)=0$ inherited by the antisymmetric property of $q$
around $1/2$. We conclude by applying the Stone-Weierstrass theorem to obtain
the existence of a real polynomial $q$ satisfying~\eqref{eq:F-2ndCond} for all
$y\in[1/2,1)$, $q(y)=-q(1-y)$ for all $y\in\R$.
\end{proof}

\section{Proofs of Section~\ref{sec:simul}}
\label{sec:defin-polyn-r}

We start with a result more precise than Lemma~\ref{lem:zerosPS}.
\begin{lem}\label{lem:roots}
  Let $L$ be a positive integer.  The complex roots of the polynomial
  $\tilde s(y)=\sum_{n=0}^L{{2L+1}\choose{2n}}y^n$  are the
  $y_{0,L},\dots,y_{L-1,L}$ defined in~(\ref{eq:s-roots}).
\end{lem}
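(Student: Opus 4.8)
The plan is to verify directly that each candidate $y_{k,L}$ is a root of $\tilde s$ and that the $L$ values $y_{0,L},\dots,y_{L-1,L}$ are pairwise distinct; since $\tilde s$ has degree $L$, these must then be all of its roots (each simple). The starting point is the identity already established in the proof of Lemma~\ref{lem:zerosPS}, namely that for all $z\in\C$,
\[
\tilde s(z^2)=\frac12\left((1+z)^{2L+1}+(1-z)^{2L+1}\right)\;.
\]

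First I would evaluate $\tilde s$ at $y=-\tan^2\theta$ for $\theta\in(-\pi/2,\pi/2)$ by writing $y=(\rmi\tan\theta)^2$ and using the elementary relations $1\pm\rmi\tan\theta=\rme^{\pm\rmi\theta}/\cos\theta$. Substituting $z=\rmi\tan\theta$ into the identity above gives
\[
\tilde s(-\tan^2\theta)=\frac{1}{2\cos^{2L+1}\theta}\left(\rme^{\rmi(2L+1)\theta}+\rme^{-\rmi(2L+1)\theta}\right)=\frac{\cos\left((2L+1)\theta\right)}{\cos^{2L+1}\theta}\;.
\]
Since $\cos\theta\neq0$ on $(-\pi/2,\pi/2)$, this quantity vanishes if and only if $\cos((2L+1)\theta)=0$.

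Next I would set $\theta_k=\frac{\pi(2k+1)}{2(2L+1)}$ for $k\in\{0,\dots,L-1\}$, so that $y_{k,L}=-\tan^2\theta_k$ by~\eqref{eq:s-roots}. Each $\theta_k$ lies in $(0,\pi/2)$ and satisfies $(2L+1)\theta_k=\frac{\pi(2k+1)}{2}$, whence $\cos((2L+1)\theta_k)=0$; by the displayed formula, $\tilde s(y_{k,L})=0$. Distinctness then follows because $\theta\mapsto\tan^2\theta$ is strictly increasing on $(0,\pi/2)$ while the $\theta_k$ are distinct points of that interval, so the $y_{k,L}$ are $L$ pairwise distinct roots of the degree-$L$ polynomial $\tilde s$, hence exactly its roots.

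The computation is entirely routine and I do not anticipate any genuine obstacle. The only point requiring a little care is checking that the arguments $\theta_k$ stay strictly inside $(0,\pi/2)$ (so that both $\cos\theta_k\neq0$ and $\tan^2$ is injective at $\theta_k$), which holds since $0<\frac{2k+1}{2(2L+1)}<\frac12$ for every $0\le k\le L-1$.
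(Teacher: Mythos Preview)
Your proof is correct and proceeds along the same line as the paper's: both start from the identity $\tilde s(z^2)=\tfrac12\bigl((1+z)^{2L+1}+(1-z)^{2L+1}\bigr)$ and use the substitution $z=\rmi\tan\theta$. The only difference is organizational: the paper derives all solutions of $\bigl(\tfrac{1+z}{1-z}\bigr)^{2L+1}=-1$ and then squares, whereas you compute the closed form $\tilde s(-\tan^2\theta)=\cos((2L+1)\theta)/\cos^{2L+1}\theta$, verify the $L$ candidates directly, and conclude by a degree count.
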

\begin{proof}
  Recall that  that 
  $\tilde s(z^2)=\frac{1}{2}\left[(1+z)^{2L+1}+(1-z)^{2L+1}\right]$ and that
  $\tilde s(z^2)=0$ is equivalent to
$z\neq 1$ and $\left(\frac{1+z}{1-z}\right)^{2L+1}=-1$,   
that is
\[
\frac{1+z}{1-z}=\rme^{\rmi (\pi+2k\pi)/(2L+1)}\;,\qquad k\in \{-L,\cdots,L\}\;.
\]
There is no such $z$ for $k=L$ and for $k\in
\{-L,\cdots,L-1\}$, this is the same as
\[
z=\frac{\rme^{\rmi (\pi+2k\pi)/(2L+1)}-1}{1+\rme^{\rmi
    (\pi+2k\pi)/(2L+1)}}=\rmi \,\tan\left(\frac{\pi(2k+1)}{2(2L+1)}\right)\;.
\]
Taking the square and keeping only $k=0,\dots,L$ to get distinct roots we get
the result.  
\end{proof}
\begin{proof}[Proof of Proposition~\ref{pro:expr-rL0}]
  Using Lemma~ \ref{lem:roots} and the Bezout equation $[B(L,0)]$, we have that
  for any $k\in\{0,\cdots,L-1\}$
\[
r_{L,0}(1-y_{k,L}) =\frac{(2L+1)^2\,2^{-2L+1}}{s_{L,0}(1-y_{k,L})}\;.
\]
Since the polynomial $r_{L,0}$ has degree at most $L-1$ with known values
in the $L$ distinct points $1-y_{k,L}$, we deduce its explicit expression given in
Proposition~\ref{pro:expr-rL0} by a standard interpolation formula. 

We conclude with the proof of the recursive relation~(\ref{eq:recursiv-formula-r}).
Since for any $M\geq 1$, $s_{L,M}(y)=y s_{L,M-1}$, the polynomial $r_{L,M}$ satisfies
\begin{equation}\label{eqn:PR-MGene}
r_{L,M}(1-y) \times [(1-y) s_{L,M-1}(1-y)]+r_{L,M}(y) \times [y s_{L,M-1}(y)]=(2L+1)^2\,2^{-2M-2L+1}\;.
\end{equation}
We deduce that $2^{2} y r_{L,M}(y)$ satisfies equation $[B(L,M-1)]$ and by
Proposition~\ref{prop:R}, 
\begin{equation}\label{eqn:rec}
2^{2} y r_{L,M}(y)=r_{L,M-1}(y)+s_{L,M-1}(1-y)\,q(y)\;,
\end{equation}
where $q$ is a polynomial satisfying $q(1-y)=-q(y)$.
Since the degree of $r_{L,M}$ is at most $L+M-1$ and that of $s_{L,M-1}$ is
$L+M-1$, we get that $q$ takes the form
$q(y)=q(0)(1-2y)$ and it only remains to determine $q(0)$.
Note that $s_{L,M}(1)=2^{2L}$, so~(\ref{eqn:rec}) yields
$q(0)=-2^{-2L}r_{L,M-1}(0)$ and we finally
obtain~(\ref{eq:recursiv-formula-r}).
\end{proof}

\bibliographystyle{plainnat}
\bibliography{cwt}
\end{document}